\documentclass[ba,preprint]{imsart}
\pubyear{202X}
\volume{TBA}
\issue{TBA}
\firstpage{1}
\lastpage{1}

\usepackage[utf8]{inputenc} 
\usepackage[english]{babel}

\usepackage{datetime}
\usepackage{multicol}
\usepackage{tcolorbox}

\usepackage{soul}

\newtcolorbox[auto counter]
  {block}
  {colback=gray!20,colframe=gray,fonttitle=\bfseries,
  title = \mytest{} procedure}
\makeatletter
\newcommand\tcb@cnt@blockautorefname{Procedure}
\makeatother

\newcommand\namedref[3][blue]{%
    \begingroup%
    \hypersetup{linkcolor=#1}%
    \hyperlink{#2}{#3}%
    \endgroup}

\usepackage{float}

\usepackage{caption, subcaption}
\usepackage{tikz}
\usepackage{xcolor}
\usepackage{graphicx}

\usepackage{amsfonts, amsmath, amssymb, amsthm, thmtools}
\usepackage{mathtools}
\usepackage{bm}
\usepackage{array,multirow,graphicx}
\usepackage{booktabs}
\usepackage{thmtools}
\usepackage{thm-restate}
\newtheorem{theorem}{Theorem}
\newtheorem{corollary}{Corollary}
\newtheorem{definition}{Definition}
\newtheorem{example}{Example}

\newenvironment{customex}[1]
  {\innercustomex}
  {\endinnercustomex}
\newtheorem{proposition}{Proposition}
\newenvironment{subproof}[1][Subproof]{%
  \begin{proof}[#1]%
}{%
  \end{proof}%
}
\DeclareMathOperator{\spn}{span}

\usepackage{nameref}
\usepackage{natbib}
\usepackage[colorlinks,citecolor=blue,urlcolor=blue,filecolor=blue,backref=page]{hyperref}
\usepackage{cleveref}

\usepackage{fontawesome5}
\usepackage{hyperref}

\newcommand{\TwoSample}{

\begin{tikzpicture}[x=0.75pt,y=0.75pt,yscale=-1,xscale=1]

\draw [color={rgb, 255:red, 0; green, 0; blue, 0 }  ,draw opacity=1 ][line width=1.5]  (13,199.23) -- (482,199.23)(52.25,25.25) -- (52.25,223) (475,194.23) -- (482,199.23) -- (475,204.23) (47.25,32.25) -- (52.25,25.25) -- (57.25,32.25)  ;
\draw [color={rgb, 255:red, 65; green, 117; blue, 5 }  ,draw opacity=1 ][line width=1.5]    (414.33,44.25) -- (364.32,65.66) -- (52.25,199.23) ;
\draw [color={rgb, 255:red, 186; green, 2; blue, 24 }  ,draw opacity=1 ][fill={rgb, 255:red, 208; green, 2; blue, 27 }  ,fill opacity=1 ][line width=1.5]    (139.83,55.25) -- (177.05,139.67) ;
\draw [shift={(178.67,143.33)}, rotate = 246.21] [fill={rgb, 255:red, 186; green, 2; blue, 24 }  ,fill opacity=1 ][line width=0.08]  [draw opacity=0] (11.61,-5.58) -- (0,0) -- (11.61,5.58) -- cycle    ;
\draw  [color={rgb, 255:red, 0; green, 0; blue, 0 }  ,draw opacity=1 ][fill={rgb, 255:red, 208; green, 2; blue, 27 }  ,fill opacity=1 ] (132.63,50.22) .. controls (132.63,47.68) and (134.68,45.63) .. (137.21,45.63) .. controls (139.74,45.63) and (141.79,47.68) .. (141.79,50.22) .. controls (141.79,52.76) and (139.74,54.81) .. (137.21,54.81) .. controls (134.68,54.81) and (132.63,52.76) .. (132.63,50.22) -- cycle ;

\draw (371.17,71.5) node  [color={rgb, 255:red, 65; green, 117; blue, 5 }  ,opacity=1 ,rotate=-336.51]  {$H_{0}: F_{X} = F_{Y}$};
\draw (173,85) node  [font=\footnotesize,rotate=-65.7]  {$\underset{( F_{X}, F_{Y}) \in H_{0}}{\text{\Large$\inf(d)$}}$};
\draw (484.5,193) node [anchor=north west][inner sep=0.75pt]   [align=left] {$\boldsymbol{F}_{X}$};
\draw (43,6) node [anchor=north west][inner sep=0.75pt]   [align=left] {$\boldsymbol{F}_{Y}$};
\draw (80,5) node [anchor=north west][inner sep=0.75pt]  [font=\huge] [align=left] {\underline{{\fontfamily{pcr}\selectfont Hypothesis Space:} $\mathbb{F}_X \times \mathbb{F}_Y$}};

\end{tikzpicture}
}

\newcommand{\mytest}{\texttt{PROTEST}} 

\startlocaldefs
\endlocaldefs

\begin{document}


\begin{frontmatter}
\title{\mytest{}: Nonparametric Testing of Hypotheses Enhanced by Experts' Utility Judgements}
\runtitle{\mytest{}}

\begin{aug}
\author{\fnms{Rodrigo F. L.} \snm{Lassance}\thanksref{addr1,t1,t2}\ead[label=e1]{rflassance@gmail.com}},
\author{\fnms{Rafael} \snm{Izbicki}\thanksref{addr1}\ead[label=e3]{rizbicki@gmail.com}}
\and
\author{\fnms{Rafael B.} \snm{Stern}\thanksref{addr2}\ead[label=e2]{rbstern@gmail.com}}

\runauthor{R.F.L. Lassance, R. Izbicki and R. B. Stern}

\address[addr1]{Department of Statistics, Federal University of São Carlos, São Paulo, Brazil.
}

\address[addr2]{Institute of Mathematics and Statistics, University of São Paulo, São Paulo, Brazil.
}

\thankstext{t1}{Institute of Mathematics and Computer Sciences, University of São Paulo, São Paulo, Brazil, \printead{e1}}

\end{aug}

\begin{abstract}
Instead of testing solely a precise hypothesis,
it is often useful to enlarge it with alternatives that are deemed to differ from it negligibly.
For instance, in a bioequivalence study
one might consider the hypothesis that
the concentration of an ingredient is 
exactly the same in two drugs.
In such a context, it might be more relevant to
test the enlarged hypothesis that
the difference in concentration between the drugs is
of no practical significance. While this concept is not alien to Bayesian statistics, applications remain confined to parametric settings and strategies on how to effectively harness experts' intuitions are often scarce or nonexistent. To resolve both issues, we introduce \mytest{}, an accessible nonparametric testing framework that seamlessly integrates with Markov Chain Monte Carlo (MCMC) methods. We develop expanded versions of the model adherence, goodness-of-fit, quantile and two-sample tests. To demonstrate how \mytest{} operates, we make use of examples, simulated studies \textendash{} such as testing link functions in a binary regression setting, as well as a comparison between the performance of \mytest{} and the PTtest \citep{holmes2015} \textendash{} and an application with data on neuron spikes. Furthermore, we address the crucial issue of selecting the threshold \textendash{} which controls how much a hypothesis is to be expanded \textendash{} even when intuitions are limited or challenging to quantify.
\end{abstract}

\begin{keyword}
\kwd{pragmatic hypothesis}
\kwd{Bayesian nonparametrics}
\kwd{equivalence test}
\kwd{adherence}
\kwd{goodness-of-fit}
\kwd{quantile}
\kwd{two-sample}
\kwd{link function}
\end{keyword}

\end{frontmatter}

\section{Introduction}
\label{sct:intro}
Throughout the history of Bayesian statistics, the idea of inserting utility judgements directly into hypotheses has been often proposed, albeit remaining largely ignored in practical settings. The most pristine example of this behavior is perhaps the defense that all point null hypotheses should be reframed as composite ones \citep{edwards1963, good2009, berger1985}. However, this idea was either applied in very specific settings \textendash{} such as switching $H_0: \theta = \theta_0$ for $H_0: |\theta - \theta_0| \in [\delta_L, \delta_U]$, with $\delta_L \le 0 \le \delta_U$ known beforehand \citep{hobbs2007, kruschke2018} \textendash{} or not applied at all, being described as ``a lot of hard work'' \citep{leamer1988}.

The appeal of using external information to enlarge hypotheses is twofold, of both theoretical and practical nature. For the former, it avoids the requirement of adding probability masses to priors -- a common strategy when using Bayes factors \citep{jeffreys1961, kass1993, migon2014}. As for the latter, it allows for the inclusion of objective and subjective knowledge, such as measurement errors and researcher considerations on negligible deviations respectively, ensuring that the new hypothesis is more akin to the actual interest of the researcher.

This work brings forth a theoretical framework for hypothesis enlargement that is both capable of expanding nonparametric hypotheses based on the inputs of experts and easily applicable through currently available technologies, such as Markov Chain Monte Carlo (MCMC) methods. With this contribution, we expect researchers to be able to test complex hypotheses without having to disregard valuable information in the process. To achieve this end, we make use of pragmatic hypotheses \citep{hodges1954,esteves2019}. We propose a wider definition, contemplating cases that go beyond the parametric setting that was assumed in previous works:
\begin{definition}[Pragmatic hypothesis]
    \label{def:prag_hyp}
    Let $\mathbb{H}$ be the hypothesis space and $H_0 \subset \mathbb{H}$ be the null hypothesis of interest. For a given dissimilarity function $d(\cdot,\cdot)$ and a threshold $\varepsilon > \min(d) \ge 0$, a pragmatic hypothesis is defined as
    \begin{equation}
        Pg(H_0, d, \varepsilon) := \bigcup_{P_0 \in H_0}\{P \in \mathbb{H}: d(P_0,P) < \varepsilon\} = \left\{P \in \mathbb{H}: \inf_{P_0 \in H_0} d(P_0,P) <\varepsilon \right\}.
        \label{eq:prag_hyp}
    \end{equation}
    For brevity, if $d(\cdot, \cdot)$ and $\varepsilon$ are evident, we substitute $Pg(H_0, d, \varepsilon)$ for $Pg(H_0)$.
\end{definition}
\noindent In this paper, we usually set $\mathbb{H} = \mathbb{F}$, where $\mathbb{F}$ is the space of all distribution functions.

The intuition behind \autoref{def:prag_hyp} is as follows. The purpose of the pragmatic hypothesis is to expand the null so that it contains all elements that, for all practical purposes, are similar enough to at least one element of $H_0$. This is the same as checking, for each $P_0 \in H_0$, which are the elements $P \in \mathbb{H}$ such that $d(P_0, P) < \varepsilon$ to then take their union, which is represented by the left side of \eqref{eq:prag_hyp}. This is the same as evaluating, for each $P \in \mathbb{H}$, if the smallest difference between $P$ and all elements of $H_0$ is less than $\varepsilon$, the right side of \eqref{eq:prag_hyp}.

We provide three major contributions to the identification and use of pragmatic hypotheses in practical settings:
\begin{enumerate}
    \item Propose an intuitive testing procedure that can be seamlessly combined with MCMC methods (\mytest{}, \autoref{sct:over});
    \item Expand the theory of pragmatic hypotheses to nonparametric settings and explore how some hypotheses can be transformed into pragmatic ones (\autoref{sct:nph});
    \item Provide practical strategies for the choice of $\varepsilon$ even when it is not initially clear which value it should assume (\autoref{sct:epsilon}). This point is particularly important since defining $\varepsilon$ is often challenging.
\end{enumerate}
To ensure the adequacy of the procedure and demonstrate its applicability, we provide two simulated studies and an application with real data. The first simulated study (\autoref{sct:binary}) evaluates if \mytest{} can recover the true link function of binary data generated from a generalized linear model (GLM), while the second (\autoref{sct:NvsT}) is a comparison between \mytest{} and the PTtest \citep{holmes2015}. As for the application, it evaluates if data on neuron spikes resembles a Poisson process and if neurons behave differently between experiments (\autoref{sct:application}). Lastly, we discuss the potential of these methods and link it with other current research areas such as three-way testing (\autoref{sct:discussion}). The proofs of all results are presented in the appendix.

\begin{example}[Water droplet experiment]
\label{ex:drop1}
    The free falling water droplet experiment \citep{duguid1969} is a study that evaluates the behavior of small water droplets (ranging from 3 to 9 micrometers) as they fall through a tube in a controlled setting. One of the experiment's main objectives is to test the validity of Fick's law of diffusion, which in this case posits that the radius of the falling droplet changes linearly through time.
    
    As the droplet falls, a camera takes pictures of it every $0.5$ second and ceases recording after $7$ seconds. Therefore, $T = \{0s, 0.5s, \cdots, 6.5s, 7s\}$ represents the timestamps used as the independent variable. Consequently, two hypotheses of interest are
    $$
        \left\{\begin{array}{l}
            H_0^1: a(t) = \beta_0 + \beta_1 t, \quad \forall t \in T, \quad (\beta_0, \beta_1) \in \mathbb{R}^2;\\
            H_0^2: a(t) = \beta_0, \quad \forall t \in T, \quad \beta_0 \in \mathbb{R}.
        \end{array}\right.
    $$
    where $a(\cdot)$ represents the radius of the droplet at a given time. The first hypothesis represents Fick's law, while the second evaluates if time can be removed as a covariate.

    \begin{figure}[ht]
        \centering
        \includegraphics[width = \linewidth]{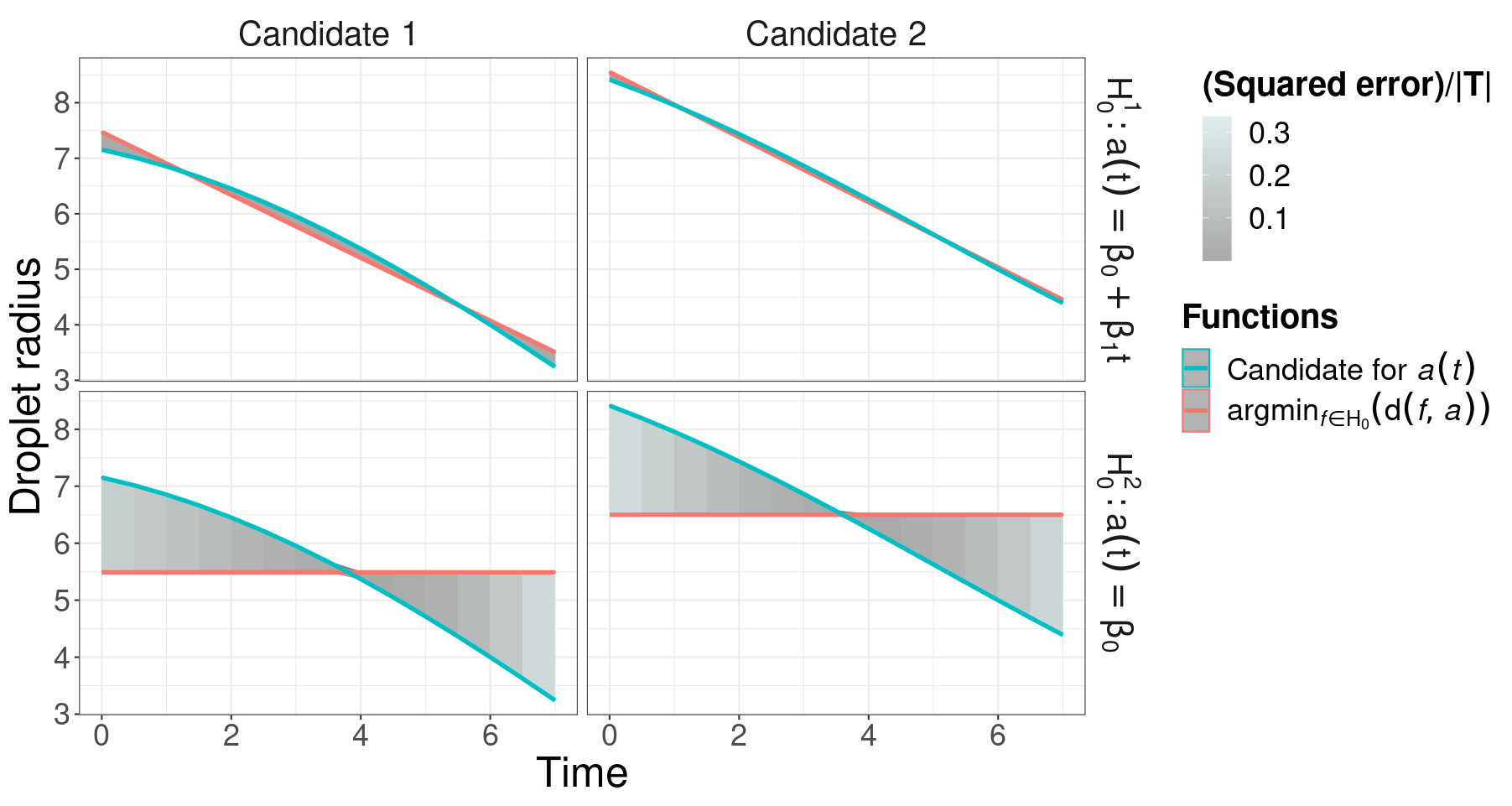}
        \caption{Candidates for $a(\cdot)$ (column) in \autoref{ex:drop1} and their best approximations under each hypothesis (row) based on the mean squared error between functions. The scale presents the point-wise squared error divided by $|T|$.}
        \label{fig:inf_rad}
    \end{figure}

    \autoref{fig:inf_rad} presents two viable functions for representing $a(\cdot)$ based on the data available (blue lines). Using as dissimilarity the square root of the expected squared error between two functions, we derive the linear functions that best approximate each under $H_0^1$ and $H_0^2$ (red lines). Following \autoref{def:prag_hyp}, the extent to which any of the linear functions is sufficiently similar to the original function depends on the dissimilarity being less than a threshold, which in this case is $\varepsilon \approx 0.1606$ (see \autoref{ex:drop13} for the reasoning behind this choice). For both cases, the dissimilarity falls under $\varepsilon$ on $H_0^1$ and over it on $H_0^2$, suggesting that Fick's law might be applicable for this case and that time should remain as a covariate.

\end{example}

\section{Overview}
\label{sct:over}
In this section, we define the Pragmatic Region Oriented TEST (\mytest{}, \autoref{def:test}) and provide an accessible guide for performing it (\namedref{blk:proc}{\mytest{} procedure}). The test is itself a variation of the Bayes decision for the 0-1-c loss function \citep{schervish2012} and directly evaluates the probability of $Pg(H_0)$.

\begin{definition}[Pragmatic region oriented test - \mytest{}]
    \label{def:test}
    Let $Pg(H_0, d, \varepsilon)$ be the pragmatic hypothesis, $\mathcal{P}$ be a random object over $\mathbb{H}$ and $\alpha \in [0,1]$. \mytest{} is such that
    \begin{itemize}
        \item If $\mathbb{P}\left(\mathcal{P} \in Pg(H_0)|\boldsymbol{X} = \boldsymbol{x}\right) \le \alpha$, reject the hypothesis;
        \item Otherwise, do not reject it.
    \end{itemize}
\end{definition}

From \autoref{def:prag_hyp}, we note that
\begin{equation}
    \mathbb{P}\left(\mathcal{P} \in Pg(H_0)|\boldsymbol{X} = \boldsymbol{x}\right) = \mathbb{P}\left(\inf_{P_0 \in H_0} d(P_0, \mathcal{P}) < \varepsilon \Big| \boldsymbol{X} = \boldsymbol{x}\right),
    \label{eq:prob_prag}
\end{equation}
which implies that the test can be conducted even when the full posterior is unknown or $Pg(H_0)$ cannot be fully specified. As long as $\inf_{P_0 \in H_0} d(P_0, P)$ can be obtained for every $P \in \mathbb{H}$, estimating \autoref{eq:prob_prag} becomes a matter of sampling from $\mathcal{P}|\boldsymbol{X} = \boldsymbol{x}$ and using the proportion of times in which $\inf_{P_0 \in H_0} d(P_0, \cdot) < \varepsilon$ as an estimate for \eqref{eq:prob_prag}. This is the motivation that leads to the \namedref{blk:proc}{\mytest{} procedure}, and ensures that it is fully compatible with MCMC methods and does not require knowledge of the full posterior distribution.

In the parametric setting, it is often possible to explicitly identify the pragmatic region since it is a subset of $\mathbb{R}^d$, so a posterior draw belongs to $Pg(H_0)$ if such subset contains it. This is not as straightforward when $\mathbb{H} = \mathbb{F}$, as $\mathcal{P}$ is a random object on the space of distribution functions. When dealing with hypotheses that reside in a function space, a more accessible strategy is to directly obtain $\inf_{P_0 \in H_0} d(P_0, P), \forall P \in \mathbb{H}$, which then allows for \autoref{eq:prob_prag} to be estimated through an MCMC sample.

\hypertarget{blk:proc}{}
\begin{block}
\begin{enumerate}
    \item Specify the null hypothesis $H_0$, the level $\alpha$, the dissimilarity function $d(\cdot, \cdot)$ and the threshold $\varepsilon$;
    \item Generate a sample $(\mathcal{P}^{(1)}, \mathcal{P}^{(2)}, \cdots, \mathcal{P}^{(N)})$, $N \in \mathbb{N}$, from the posterior distribution $\mathcal{P}|\boldsymbol{x}$;
    \item Obtain
    \begin{equation}
        \hat{\mathbb{P}}\left(\mathcal{P} \in Pg(H_0)|\boldsymbol{X} = \boldsymbol{x}\right) = \frac{1}{N}\sum_{i = 1}^{N} \mathbb{I}\left(\inf_{P_0 \in H_0} d(P_0, \mathcal{P}^{(i)}) < \varepsilon \right),
        \label{eq:prob_pg}
    \end{equation}
    where $\mathbb{I}(\cdot)$ is the indicator function;
    \item Reject the hypothesis if the estimated probability in \eqref{eq:prob_pg} is equal to or less than $\alpha$.
\end{enumerate}
\end{block}

\begin{customex}{1.1}[Water droplet experiment, continued]\label{ex:drop11}
Based on the \namedref{blk:proc}{\mytest{} procedure}, we set $\alpha = 0.05$ to finish the first step. For the second step, we apply a Gaussian process \citep{williams1996} with a Gaussian kernel to the data, using its posterior to draw regression functions for the test. For now, we omit how to achieve step 3 (see \autoref{ex:drop12} for the full discussion). The last step is evident from \autoref{fig:drop_test}. Based on the choice of $(\varepsilon, \alpha)$, we assert the validity of Fick's law and keep time as a covariate.

\begin{figure}[ht]
        \centering
        \includegraphics[width = \linewidth]{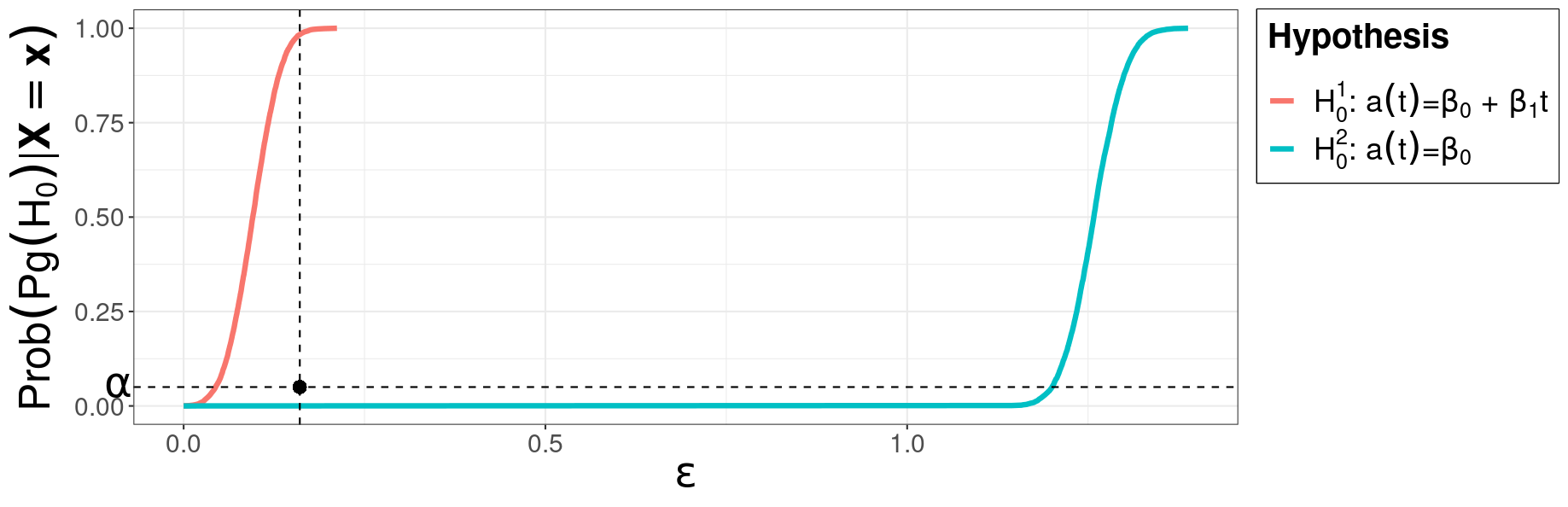}
        \caption{Largest $\varepsilon$ that entails rejection and the posterior probability of $Pg(H_0)$ for each hypothesis (colored curves) in \autoref{ex:drop11}. The black point represents the particular choice of $(\varepsilon \approx 0.1606, \alpha = 0.05)$, leading to not reject $H_0^1$ and reject $H_0^2$.}
        \label{fig:drop_test}
    \end{figure}
\end{customex}

In \autoref{ex:drop11}, we performed two tests in sequence -- $H_0^1$ and then $H_0^2$, with $H_0^2$ being more specific than $H_0^1$ -- while keeping $\alpha$ constant between tests. The following result ensures that \mytest{} cannot reach the counterintuitive conclusion of rejecting $H_0^1$ but not $H_0^2$ when $H_0^1 \supseteq H_0^2$ and $\alpha$ is fixed.

\begin{corollary}[Monotonicity property of \mytest{}]
\label{cor:monotonicity}
Let $H_0^1, H_0^2 \subset \mathbb{H}$ be such that
$$
    Pg(H_0^1, d, \varepsilon_1) \supseteq Pg(H_0^2, d, \varepsilon_2)
$$
and take $\alpha \in [0,1]$. If \mytest{} leads to rejecting $Pg(H_0^1)$, then it also rejects $Pg(H_0^2)$.
\end{corollary}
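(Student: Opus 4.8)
The plan is to reduce the claim directly to the monotonicity of the posterior probability measure. By \autoref{def:test}, \mytest{} rejects $Pg(H_0^1)$ precisely when $\mathbb{P}\left(\mathcal{P} \in Pg(H_0^1) \mid \boldsymbol{X} = \boldsymbol{x}\right) \le \alpha$, and it rejects $Pg(H_0^2)$ precisely when $\mathbb{P}\left(\mathcal{P} \in Pg(H_0^2) \mid \boldsymbol{X} = \boldsymbol{x}\right) \le \alpha$. Thus the entire content of the corollary is a single inequality between these two posterior probabilities.

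First I would invoke the hypothesis $Pg(H_0^1, d, \varepsilon_1) \supseteq Pg(H_0^2, d, \varepsilon_2)$ together with monotonicity of any probability measure under set inclusion to conclude
$$
    \mathbb{P}\left(\mathcal{P} \in Pg(H_0^2) \mid \boldsymbol{X} = \boldsymbol{x}\right) \le \mathbb{P}\left(\mathcal{P} \in Pg(H_0^1) \mid \boldsymbol{X} = \boldsymbol{x}\right).
$$
Then, assuming \mytest{} rejects $Pg(H_0^1)$, the right-hand side is at most $\alpha$, so the left-hand side is at most $\alpha$ as well, which by \autoref{def:test} means \mytest{} rejects $Pg(H_0^2)$. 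That completes the argument; there is essentially no hard step, only bookkeeping.

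The one point that deserves a word of care is measurability: for the posterior probabilities to be well defined we need $Pg(H_0^1)$ and $Pg(H_0^2)$ to be measurable subsets of $\mathbb{H}$, which via \eqref{eq:prob_prag} amounts to measurability of the maps $P \mapsto \inf_{P_0 \in H_0^i} d(P_0, P)$. I would simply note that this is a standing assumption under which \mytest{} is defined (it is already implicit in \autoref{def:test} and in \eqref{eq:prob_prag}), so no genuine obstacle arises. If one wished to connect the statement to the informal phrasing ``$H_0^1 \supseteq H_0^2$'' used in the surrounding text, I would additionally remark that $H_0^1 \supseteq H_0^2$ together with $\varepsilon_1 \ge \varepsilon_2$ yields the required inclusion $Pg(H_0^1, d, \varepsilon_1) \supseteq Pg(H_0^2, d, \varepsilon_2)$ immediately from \autoref{def:prag_hyp}, though this is not needed for the corollary as stated.
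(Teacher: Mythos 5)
Your argument is correct and is essentially the same as the paper's: the paper proves the corollary by specializing its three-way monotonicity result (\autoref{prop:3way_mono}) with $\alpha_1=\alpha_2$, and that proposition's proof is exactly your step — monotonicity of the posterior probability under the inclusion $Pg(H_0^1)\supseteq Pg(H_0^2)$, followed by transferring the rejection threshold. The only cosmetic difference is that you argue directly rather than routing through the three-way version.
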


This is not usually the case for other standard testing procedures, such as the p-value \citep{schervish1996}. Moreover, even if the original two hypotheses are not nested, as long as their pragmatic versions are, this property will still hold for \mytest{}.

\section{Nonparametric pragmatic hypotheses}
\label{sct:nph}

In this section, we transform some common nonparametric hypotheses into pragmatic ones. From \autoref{def:prag_hyp}, this can be achieved by finding the infimum of the dissimilarity function between $H_0$ and any given $P \in \mathbb{H}$. We use the data to draw specific elements from $\mathbb{H}$ and the infimum to check if they belong to $Pg(H_0)$, rejecting the hypothesis if less than $\alpha\times 100$\% of them do. We can produce such elements by, for example, sampling from the Dirichlet or the Pólya tree processes \citep{ferguson1973, lavine1992, lavine1994}.

In some cases, the infimum can be obtained analytically and for a wide range of dissimilarity functions (such as in \autoref{sct:two_sample}), while in others it requires an optimization procedure (\autoref{sct:distribution}) or the choice of a specific dissimilarity (\autoref{sct:adherence}, \autoref{sct:quantile}). Whenever possible, the choice of the dissimilarity function should be based on how the researcher can best elicit their knowledge and interests about a problem. If that is not initially clear, we recommend the use of the classification dissimilarity due to its intuitive appeal.
\begin{definition}[Nonparametric classification dissimilarity function]
    \label{def:class_dissim}
    If $\mathbb{H} = \mathbb{F}$ and $F, G \in \mathbb{F}$ are distribution functions, the classification dissimilarity is given by
    \begin{equation}
        d_C(G, F) := 0.5 \left[\mathbb{P}\left(\frac{g(Z)}{f(Z)} > 1 \bigg| Z \sim G \right) + \mathbb{P}\left(\frac{f(Z)}{g(Z)} > 1 \bigg| Z \sim F \right)\right] \in [0.5,1],
        \label{eq:class_dissim}
    \end{equation}
    where $Z$ is a future observation, while $f$ and $g$ are the respective density functions of $F$ and $G$.
\end{definition}

The idea behind \autoref{eq:class_dissim} is as follows: say that there are two possible distribution functions ($F$ or $G$) that could be used to generate the future observation $Z$, and that there is no reason to assume one is more likely than the other, so $\mathbb{P}(Z \sim F) = \mathbb{P}(Z \sim G) = 0.5$. If the criteria for deciding from which distribution $Z$ came from is the likelihood ratio (LR), \eqref{eq:class_dissim} is the probability that the LR will favor the true distribution of the data. In other words, if $h_T(\cdot)$ is the true density function and $h_F(\cdot)$ is the other, then
\begin{align*}
    \mathbb{P}\left(\frac{h_T(Z)}{h_F(Z)} > 1 \right) &= 0.5 \times \mathbb{P}\left(\frac{h_T(Z)}{h_F(Z)} > 1 \bigg| Z \sim G \right) + 0.5 \times \mathbb{P}\left(\frac{h_T(Z)}{h_F(Z)} > 1 \bigg| Z \sim F \right)\\
    &= 0.5 \times \mathbb{P}\left(\frac{g(Z)}{f(Z)} > 1 \bigg| Z \sim G \right) + 0.5 \times \mathbb{P}\left(\frac{f(Z)}{g(Z)} > 1 \bigg| Z \sim F \right).
\end{align*}
Moreover, thanks to the Neyman-Pearson lemma \citep{neyman1933}, we conclude that the classification dissimilarity provides the highest achievable probability of correctly identifying which distribution function generated $Z$.

We present pragmatic versions of a model adherence test based on linear predictors (\autoref{sct:adherence}), the goodness-of-fit test (\autoref{sct:distribution}), the quantile test (\autoref{sct:quantile}) and the two-sample test (\autoref{sct:two_sample}). Whenever required, we apply a subscript to $\mathbb{F}$ to avoid ambiguity on what is the random variable being referenced. For example, if $X \in \mathbb{N}$, $\mathbb{F}_X$ may contain a Poisson distribution, but not a Normal distribution.

\subsection{Model adherence test}{\label{sct:adherence}}

We begin with a test focused on regression models applicable to data $(\boldsymbol{y},\boldsymbol{X})$, and therefore $\mathbb{H}$ is a space of functions of the type $g:  \mathcal{X} \longrightarrow \mathbb{R}$, where $\mathcal{X}$ is the covariates' domain. Our main finding (\autoref{thm:lm_inf}) shows how to analytically obtain the pragmatic hypothesis when comparing a function to a class of linear models.

\begin{theorem}[Linear model test]\label{thm:lm_inf}
    Let $\mathbb{H}$ be such that
    $$
        g \in \mathbb{H} \Longleftrightarrow \mathbb{E}_X(g^2) = \int_{\mathcal{X}} g(\boldsymbol{x})^2 d\mathbb{P}(\boldsymbol{x}) < \infty.
    $$
    Let $\boldsymbol{b}(\boldsymbol{x}) = (b_1(\boldsymbol{x}), b_2(\boldsymbol{x}), \cdots, b_k(\boldsymbol{x})) \subset \mathbb{H}$ be a linearly independent set of linear functions and choose $H_0$ such that
    \begin{equation}
        \label{eq:lin_hyp}
        H_0: R(\boldsymbol{x}) = \boldsymbol{b}(\boldsymbol{x})\boldsymbol{\beta}, \quad \forall \boldsymbol{x} \in \mathcal{X}, \quad \boldsymbol{\beta} \in \mathbb{R}^k,
    \end{equation}
    where $R(\cdot)$ is the true regression function. If $d(f, g) := \sqrt{\mathbb{E}_X[(f - g)^2]}$, then  $d(H_0, g) = d(\boldsymbol{b} \times \hat{\boldsymbol{\beta}}, g)$ for any $g \in \mathbb{H}$, where
    \begin{align*}
        \hat{\boldsymbol{\beta}} = A_{\boldsymbol{b}}^{-1} \times \boldsymbol{g}_{\boldsymbol{b}},
        \quad A_{\boldsymbol{b}} &=
        \left(\begin{array}{cccc}
             \mathbb{E}[b_1^2(\boldsymbol{X})] & \mathbb{E}[b_2(\boldsymbol{X}) b_1(\boldsymbol{X})] & \cdots & \mathbb{E}[b_k(\boldsymbol{X}) b_1(\boldsymbol{X})] \\
             \mathbb{E}[b_1(\boldsymbol{X}) b_2(\boldsymbol{X})] & \mathbb{E}[b_2^2(\boldsymbol{X})] & \cdots & \mathbb{E}[b_k(\boldsymbol{X}) b_2(\boldsymbol{X})] \\
             \vdots     & \vdots     & \ddots & \vdots     \\
             \mathbb{E}[b_1(\boldsymbol{X}) b_k(\boldsymbol{X})] & \mathbb{E}[b_2(\boldsymbol{X}) b_k(\boldsymbol{X})] & \cdots & \mathbb{E}[b_k^2(\boldsymbol{X})]
        \end{array}\right),\\
        \boldsymbol{g}_{\boldsymbol{b}}' &= \Big(\mathbb{E}[g(\boldsymbol{X}) b_1(\boldsymbol{X})], \quad \mathbb{E}[g(\boldsymbol{X}) b_2(\boldsymbol{X})], \quad \cdots \quad, \quad \mathbb{E}[g(\boldsymbol{X}) b_k(\boldsymbol{X})] \Big).
    \end{align*}
\end{theorem}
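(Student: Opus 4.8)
The plan is to recognise that $d$ is simply the (pseudo-)norm of the Hilbert space $L^2(\mathbb{P}_X)$ restricted to $\mathbb{H}$, and that the competitor set in $H_0$, namely $\{\boldsymbol{b}(\boldsymbol{x})\boldsymbol{\beta}:\boldsymbol{\beta}\in\mathbb{R}^k\}$, is exactly the linear span $V:=\operatorname{span}\{b_1,\dots,b_k\}$. Because $V$ is finite dimensional it is closed, so $\inf_{P_0\in H_0} d(P_0,g)=\inf_{v\in V}\lVert g-v\rVert_{L^2(\mathbb{P}_X)}$ is the distance from $g$ to a closed subspace, and the whole statement reduces to the classical orthogonal-projection (least-squares) argument: I only need to exhibit $\boldsymbol{b}\hat{\boldsymbol{\beta}}$ as the projection $\Pi_V g$ and identify $\hat{\boldsymbol{\beta}}$ via the normal equations.

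Concretely I would proceed in three steps. First, the finiteness bookkeeping: since $g,b_j\in\mathbb{H}$ we have $\mathbb{E}[g^2],\mathbb{E}[b_j^2]<\infty$, so Cauchy--Schwarz gives $|\mathbb{E}[g\,b_j]|\le\sqrt{\mathbb{E}[g^2]\,\mathbb{E}[b_j^2]}<\infty$ and $|\mathbb{E}[b_i b_j]|\le\sqrt{\mathbb{E}[b_i^2]\,\mathbb{E}[b_j^2]}<\infty$, so $A_{\boldsymbol{b}}$ and $\boldsymbol{g}_{\boldsymbol{b}}$ are well defined. Second, characterise the minimiser: a vector $v^\star=\sum_i\hat\beta_i b_i\in V$ minimises $\lVert g-v\rVert$ over $V$ if and only if $g-v^\star$ is orthogonal to every $b_j$, i.e. $\mathbb{E}\big[(g-\sum_i\hat\beta_i b_i)b_j\big]=0$ for $j=1,\dots,k$, which is precisely the system $A_{\boldsymbol{b}}\hat{\boldsymbol{\beta}}=\boldsymbol{g}_{\boldsymbol{b}}$; one can quote this from the Hilbert projection theorem or, to stay elementary, prove it directly from the Pythagorean identity $\lVert g-\boldsymbol{b}\boldsymbol{\beta}\rVert^2=\lVert g-\boldsymbol{b}\hat{\boldsymbol{\beta}}\rVert^2+\lVert\boldsymbol{b}(\hat{\boldsymbol{\beta}}-\boldsymbol{\beta})\rVert^2$, whose cross term $2\langle g-\boldsymbol{b}\hat{\boldsymbol{\beta}},\,\boldsymbol{b}(\hat{\boldsymbol{\beta}}-\boldsymbol{\beta})\rangle$ vanishes as soon as $\hat{\boldsymbol{\beta}}$ satisfies the normal equations, so that $\boldsymbol{b}\hat{\boldsymbol{\beta}}$ is indeed the (unique up to $\mathbb{P}_X$-null sets) minimiser. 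Third, solve the system: $A_{\boldsymbol{b}}$ is the Gram matrix of $b_1,\dots,b_k$ in $L^2(\mathbb{P}_X)$, and a Gram matrix is invertible precisely when its generators are linearly independent in that inner-product space, which is the hypothesis on $\boldsymbol{b}$; hence $\hat{\boldsymbol{\beta}}=A_{\boldsymbol{b}}^{-1}\boldsymbol{g}_{\boldsymbol{b}}$ and $d(H_0,g)=\lVert g-\boldsymbol{b}\hat{\boldsymbol{\beta}}\rVert_{L^2(\mathbb{P}_X)}=d(\boldsymbol{b}\hat{\boldsymbol{\beta}},g)$.

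The one genuinely delicate point is the passage to the Hilbert-space picture: elements of $\mathbb{H}$ are honest functions rather than $\mathbb{P}_X$-equivalence classes, and $d$ is only a pseudo-metric, so invoking the projection theorem verbatim requires passing to the quotient $L^2(\mathbb{P}_X)$ and lifting the minimiser back. The cleanest way around this — and the route I would take — is to avoid the projection theorem entirely and argue the minimisation directly through the Pythagorean expansion above, so that the only substantive fact needed is the invertibility of $A_{\boldsymbol{b}}$, i.e. the $\mathbb{P}_X$-almost-sure linear independence of the $b_j$'s. Everything else is routine linear algebra together with the Cauchy--Schwarz finiteness estimates.
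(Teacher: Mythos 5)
Your proposal is correct and follows essentially the same route as the paper: the paper's proof also identifies $\mathbb{H}$ with the Hilbert space $L^2(\mathcal{X},\sigma(\mathcal{X}),\mathbb{P})$, notes that $H_0=\operatorname{span}\{b_1,\dots,b_k\}$ is a closed subspace, and characterises the minimiser via the orthogonality (normal-equation) condition $A_{\boldsymbol{b}}\hat{\boldsymbol{\beta}}=\boldsymbol{g}_{\boldsymbol{b}}$, citing the projection characterisation from \cite{brezis2011} where you offer the equivalent Pythagorean expansion. Your added care about the pseudo-metric quotient and about invertibility of the Gram matrix requiring $\mathbb{P}$-almost-sure (not merely pointwise) linear independence of the $b_j$'s is a refinement the paper leaves implicit, but it does not change the argument.
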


Some lingering aspects of \autoref{thm:lm_inf} require additional explanations. Framing the hypothesis as the span of linearly independent functions allows for testing a diverse set of assumptions, some of them being: $\boldsymbol{b}(\boldsymbol{x}) = \boldsymbol{x}$ (standard linear regression), $\boldsymbol{b}(\boldsymbol{x}) = \boldsymbol{x}_{-i}$ (removal of the $i$-th entry of the vector, thus providing a variable selection procedure) and $\boldsymbol{b}(\boldsymbol{x}) = (x_1+x_2, \boldsymbol{x}_{-\{1,2\}}')'$ (first two entries receive the same parameter $\beta$). As for the choice of the probability measure $\mathbb{P}$, if the context of the problem is not sufficient to imply one, we suggest using the empirical distribution of $\boldsymbol{b}(\boldsymbol{X})$, which leads to $\hat{\boldsymbol{\beta}} = (\boldsymbol{b}(\boldsymbol{X})'\boldsymbol{b}(\boldsymbol{X}))^{-1}\boldsymbol{b}(\boldsymbol{X})'g(\boldsymbol{X})$.

\begin{customex}{1.2}[Water droplet experiment, continued] \label{ex:drop12}
    As mentioned in \autoref{ex:drop1}, the covariate $T$ is a discrete variable and all times are recorded in the experiment, therefore it is reasonable to assign a discrete uniform distribution to it. Hence
    \begin{equation*}
        d(f,g) = \sqrt{\int_{\mathcal{X}} |f(\boldsymbol{x}) - g(\boldsymbol{x})|^2 d\mathbb{P}(\boldsymbol{x})} = \sqrt{\frac{1}{15}\sum_{t \in T} |f(t) - g(t)|^2},
    \end{equation*}
    which is a weighted version of the $l^2$ distance. From \autoref{thm:lm_inf} and assuming $T$ to be a column vector,  $\boldsymbol{b}(\boldsymbol{X}) = (\boldsymbol{1}, T)$ for $H_0^1$ and $\boldsymbol{b}(\boldsymbol{X}) = \boldsymbol{1}$ for $H_0^2$.

    Once again, we use the Gaussian process to model the data, this time applying different kernels (exponential, Gaussian and Matérn) for more robust results. Since $T$ is discrete, it is only necessary to obtain draws of the regression function at its values. The remaining steps of the \namedref{blk:proc}{\mytest{} procedure} lead us to apply \autoref{thm:lm_inf} to obtain the dissimilarity between each draw and $H_0$ and then take the proportion of times such dissimilarity is less than 0.1606 to reach a decision.

    \autoref{fig:drop-test} presents the results of both tests, leading to non-rejection for $H_0^1$ and to rejection for $H_0^2$. For $H_0^1$, \autoref{fig:drop1} shows that the significance level required to reject the hypothesis would be of at least 0.25, leading to the conclusion that the droplet radius can indeed be described as a linear function of the time. As for $H_0^2$, the threshold of choice leads to rejection in all cases, with a considerably higher value required for concluding otherwise. Hence, not only can the data be described by a linear model, but it also requires time to be kept as a covariate.

    \begin{figure}[ht]
         \centering
         \begin{subfigure}[b]{0.49\textwidth}
             \centering
             \includegraphics[width=\textwidth]{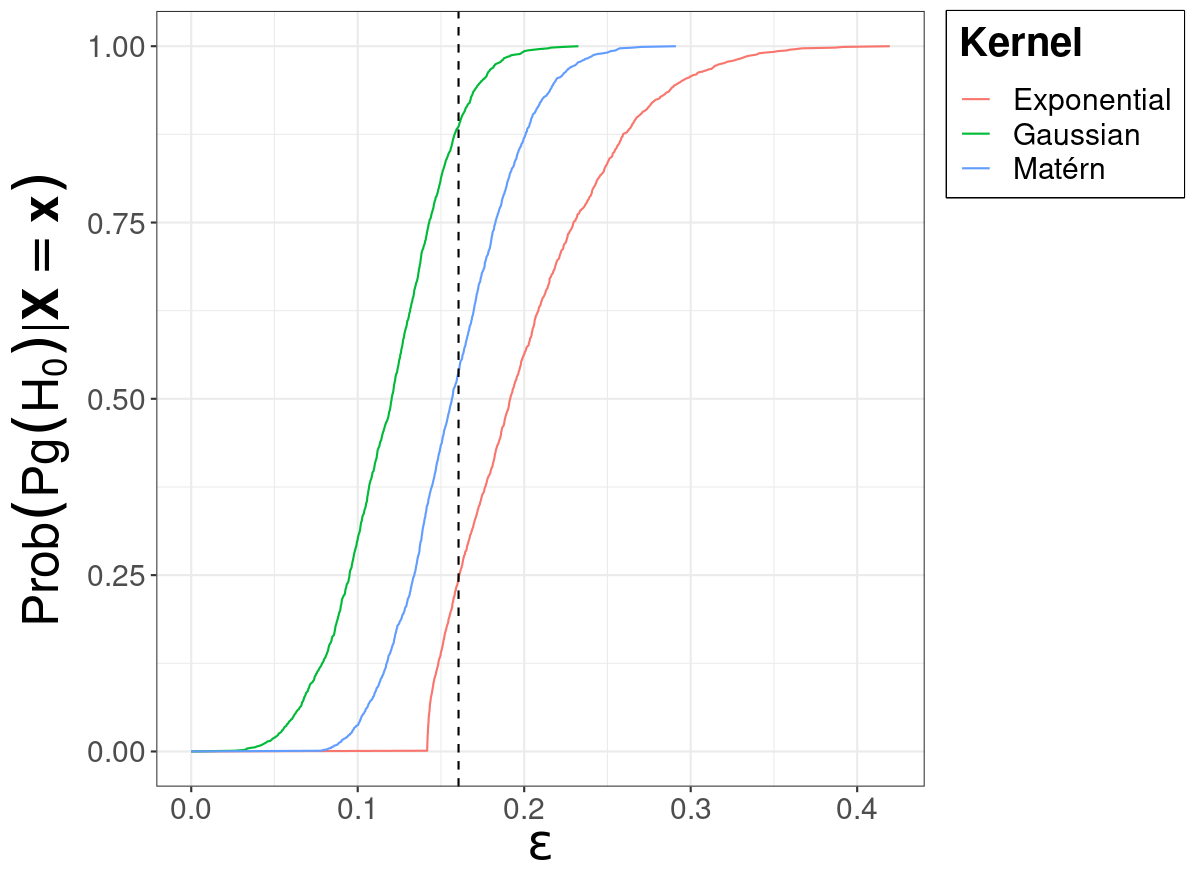}
             \caption{$H_0^1: a(t) = \beta_0 + \beta_1 t$}
             \label{fig:drop1}
         \end{subfigure}
         \hfill
         \begin{subfigure}[b]{0.49\textwidth}
             \centering
             \includegraphics[width=\textwidth]{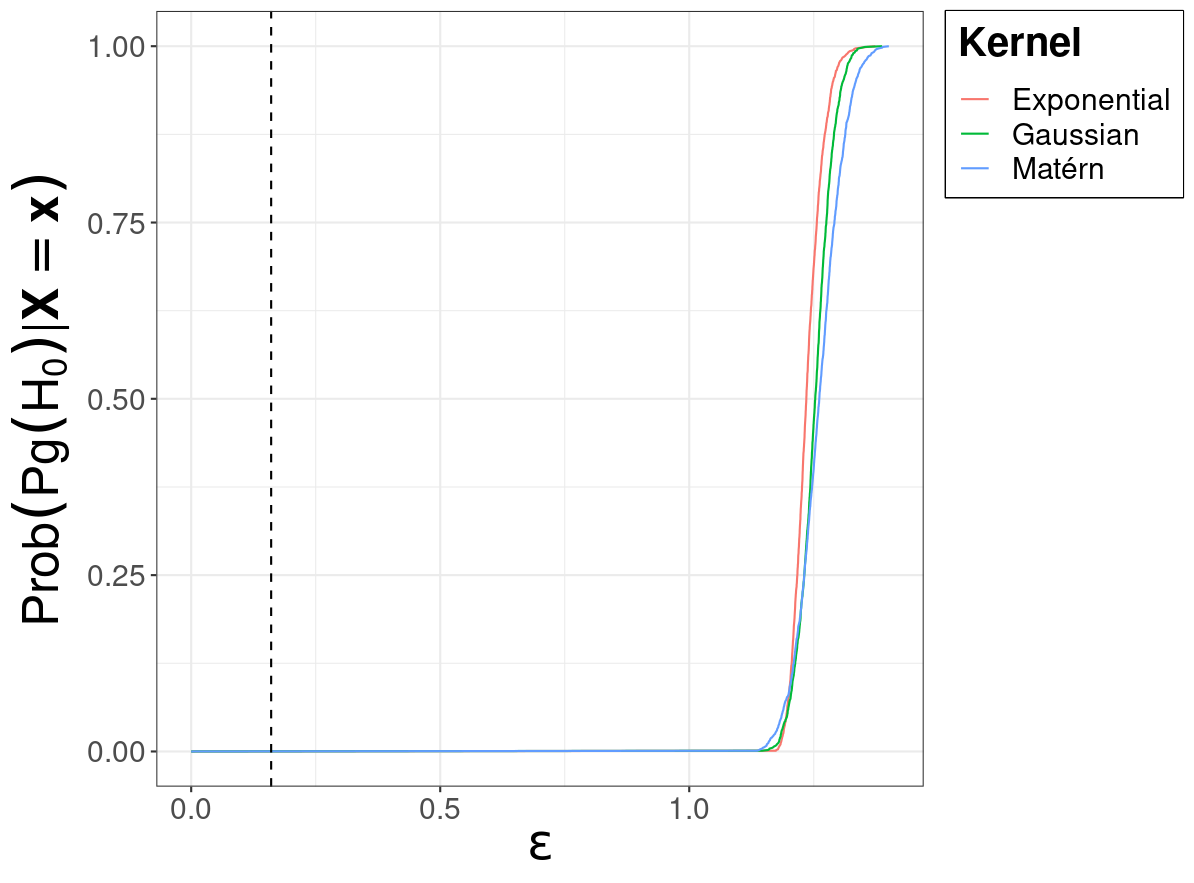}
             \caption{$H_0^2: a(t) = \beta_0$}
             \label{fig:drop2}
         \end{subfigure}
            \caption{Largest $\varepsilon$ that entails rejection and the posterior probability of $Pg(H_0)$ for each kernel of the Gaussian process in \autoref{ex:drop12}. The dashed line marks the threshold value ($\varepsilon \approx 0.1606$).}
            \label{fig:drop-test}
    \end{figure}
\end{customex}

Going beyond linear regression, \autoref{thm:lm_inf} can also be used for testing models whose regression function depends on a linear combination of $\boldsymbol{b}(\boldsymbol{x})$, such as GLMs. For a known function $h(\cdot)$, the same test can be performed by switching the hypothesis in \autoref{eq:lin_hyp} for
\begin{equation}
    \label{eq:glm_hyp}
    H_0: R(\boldsymbol{x}) = h(\boldsymbol{b}(\boldsymbol{x})\boldsymbol{\beta}) \Longleftrightarrow H_0: h^{-1}(R(\boldsymbol{x})) = \boldsymbol{b}(\boldsymbol{x})\boldsymbol{\beta}, \quad \forall \boldsymbol{x} \in \mathcal{X}, \quad \boldsymbol{\beta} \in \mathbb{R}^k,
\end{equation}
as long as $h^{-1}(\cdot)$ can be obtained. In \autoref{sct:binary}, we use this strategy in a simulated setting to check for adherence when the response variable is binary.

\subsection{Goodness-of-fit test}
\label{sct:distribution}

Let $H_0: X \sim F$, where $F$ is a fixed distribution function. Then, for a threshold $\varepsilon$ and a dissimilarity function $d(\cdot,\cdot)$, the pragmatic hypothesis is given by
\begin{equation}
    Pg(H_0) = \left\{P \in \mathbb{F}: d(F,P) <\varepsilon \right\},
    \label{eq:dist1}
\end{equation}
since $F$ is the only distribution function that belongs to $H_0$. Thus, a goodness-of-fit test can be executed through \mytest{}, with the problem of the dissimilarity being reduced to that of obtaining $d(F, \cdot)$.

Going beyond a single distribution function, we can also determine the pragmatic hypothesis for a parametric family. If $\theta \in \Theta$ is the parameter vector of such family, the null hypothesis is
\begin{equation*}
    H_0: X \sim F_\theta, \quad \theta \in \Theta_0 \subseteq \Theta.
\end{equation*}
Hence, the pragmatic hypothesis is
\begin{equation}
    Pg(H_0) = \left\{P \in \mathbb{F}: \inf_{\theta \in \Theta_0} d(F_{\theta},P) <\varepsilon \right\}.
\label{eq:dist2}
\end{equation}
This means that the process of identifying if a candidate $P \in \mathbb{F}$ belongs to $Pg(H_0)$ can be translated into an optimization procedure. For every given $P$, the objective is to find $\hat{\theta} \in \Theta_0$ such that $d(F_{\hat{\theta}},P) \le d(F_{\theta},P), \forall \theta \in \Theta_0$. Then, if $\hat{\theta}$ provides a dissimilarity smaller than $\varepsilon$, we conclude that $P \in Pg(H_0)$.

\begin{example}[$H_0: N(t), t \in \mathbb{R}^+,$ is a Poisson process]
\label{ex:pois_process}

The Poisson process \citep{ross2009} is a counting process that assumes that $N(t) \sim Poisson(\lambda t), \forall t \in \mathbb{R}_{\ge 0}$. Let $(X_1, \cdots, X_n)$ be a sample of the moment in time each observation has occurred and $T_i := X_i - X_{i-1}$, $i \in \{1, \cdots, n\}$. Then,
$$
    H_0: N(t) \text{ is a Poisson process} \Longleftrightarrow H_0: T_i|\lambda \stackrel{ind.}{\sim} Exp(1/\lambda), \lambda \in \mathbb{R}_{\ge 0}, i \in \{1, \cdots, n\},
$$
and hence the pragmatic hypothesis is
\begin{equation}
    \label{eq:prag_pois}
    Pg(H_0) = \left\{P \in \mathbb{F}_T: \inf_{\lambda \in \mathbb{R}_{\ge 0}} d(F_{\lambda},P) <\varepsilon \right\},
\end{equation}
where $F_{\lambda} \equiv Exp(1/\lambda)$.

Choosing the $L^{\infty}$ distance \textendash{} the same used in the Kolmogorov-Smirnov test \citep{shiryayev1992} \textendash{} for \eqref{eq:prag_pois}, it would be represented as
\begin{equation}
\label{eq:linf_pois}
    d_\infty(F_{\lambda},P) = \sup_{t \in \Omega}|F_{\lambda}(t) - P(t)| = \sup_{t \in \Omega}|1 - \exp(-\lambda t) - P(t)|.
\end{equation}
Therefore, $P \in Pg(H_0) \Longleftrightarrow \exists \lambda \in \mathbb{R}_{\ge 0}: d_\infty(F_{\lambda}, P) < \varepsilon$.
Since $P$ is fixed, such condition can be verified through an optimization procedure by finding the value for $\lambda$ that minimizes $d_\infty(F_{\lambda}, P)$, which is achievable through general optimization routines such as the \texttt{optim} function in \texttt{R} \citep{rcore2022}. This exact test is carried out in \autoref{sct:neuron_test1}.
\end{example}

\subsection{Quantile test}
\label{sct:quantile}

In this section, we propose a quantile test that does not require any distributional assumption on the data. Let $x_0$ and $p_0$ be such that $\mathbb{P}(X \le x_0) = p_0$, i.e., $x_0$ is the $p_0$-quantile of $X$ if $\mathbb{P}$ is its true probability measure. Then, the hypothesis of interest for this case would be
$$
    H_0: F(x_0) = p_0, \quad F \in \mathbb{F}_X.
$$
Closed-form solutions for this hypothesis depend on the dissimilarity function of choice. Let
\begin{equation}
    \label{eq:l1_dist}
    d_1(F, G) := \|F-G\|_1 = \int_{\mathbb{R}} |F(x)-G(x)|dx, \quad F, G \in \mathbb{F}.
\end{equation}
The following theorem provides a straightforward procedure for obtaining the pragmatic hypothesis when \eqref{eq:l1_dist} is the dissimilarity function.
\begin{theorem}[Quantile test]
\label{thm:quant_inf}
Let $H_0: F(x_0) = p_0$, $F \in \mathbb{F}_X,$ be the null hypothesis and take \eqref{eq:l1_dist} as the dissimilarity function. Then, $\forall P \in \mathbb{F}_X$, if $a := \min(P^{-1}(p_0), x_0)$ and $b := \max(P^{-1}(p_0), x_0)$,
\begin{equation}
    \label{eq:quant_inf}
    \inf_{P_0 \in H_0} d(P_0, P) = \inf_{P_0 \in H_0} \int_{-\infty}^\infty |P_0(x) - P(x)|dx = \int_{a}^b |p_0 - P(x)|dx.
\end{equation}
\end{theorem}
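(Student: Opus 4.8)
The plan is to prove the identity in \eqref{eq:quant_inf} as two matching inequalities. Write $q := P^{-1}(p_0)$ for the generalized inverse $\inf\{x : P(x) \ge p_0\}$, so that $a = \min(q, x_0)$ and $b = \max(q, x_0)$; I assume $p_0 \in (0,1)$, which makes $q$ finite for any $P \in \mathbb{F}_X$. First I would record the elementary facts about $q$: $P(q) \ge p_0$ (by right-continuity of $P$), $P(x) \ge p_0$ for every $x \ge q$ (by monotonicity), and $P(x) < p_0$ for every $x < q$; hence $P(x_0) \ge p_0 \iff x_0 \ge q$. These pin down the position of $P$ relative to the level $p_0$ on the crucial interval $[a,b]$: if $x_0 \ge q$ then $[a,b] = [q, x_0]$ and $P \ge p_0$ throughout, whereas if $x_0 < q$ then $[a,b] = [x_0, q]$ and $P < p_0$ on $[x_0, q)$.

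For the lower bound I would take an arbitrary $P_0 \in H_0$ and exploit only that $P_0$ is nondecreasing with $P_0(x_0) = p_0$, so $P_0(x) \le p_0$ for $x \le x_0$ and $P_0(x) \ge p_0$ for $x \ge x_0$. Matching this against the location of $P$ on $[a,b]$ from the first step shows that $p_0$ lies between $P_0(x)$ and $P(x)$ for a.e.\ $x \in [a,b]$ (in the case $x_0 \ge q$ one has $P_0(x) \le p_0 \le P(x)$ there; in the case $x_0 < q$ one has $P(x) < p_0 \le P_0(x)$). By the trivial inequality $|u - w| \ge |v - w|$ whenever $v$ is between $u$ and $w$, this gives $|P_0(x) - P(x)| \ge |p_0 - P(x)|$ a.e.\ on $[a,b]$; integrating over $\mathbb{R} \supseteq [a,b]$ and using nonnegativity of the integrand yields $d(P_0, P) \ge \int_a^b |p_0 - P(x)|\,dx$, and taking the infimum over $P_0 \in H_0$ gives ``$\ge$''.

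For the matching upper bound I would exhibit elements of $H_0$ that flatten $P$ to the value $p_0$ over $[a,b]$. When $x_0 < q$, the function that equals $P$ on $(-\infty, x_0) \cup [q, \infty)$ and equals $p_0$ on $[x_0, q)$ is itself a valid distribution function lying in $H_0$ (the only nontrivial checks are right-continuity and monotonicity at the gluing points $x_0$ and $q$, both of which go through because $P(x_0) < p_0$ and $P(q) \ge p_0$), and its $L^1$ distance to $P$ is exactly $\int_a^b |p_0 - P(x)|\,dx$. When $x_0 \ge q$, the analogous flattened function would have to equal $p_0$ on all of $[q, x_0]$ and then jump to $P(x_0)$, which breaks right-continuity at $x_0$ whenever $P(x_0) > p_0$; so instead, for each $\delta > 0$, I would use the CDF $P_0^{(\delta)}$ equal to $P$ off $[q, x_0 + \delta)$ and equal to $p_0$ on $[q, x_0 + \delta)$, which is a genuine CDF in $H_0$ satisfying $d(P_0^{(\delta)}, P) = \int_a^b |p_0 - P(x)|\,dx + \int_{x_0}^{x_0 + \delta} |p_0 - P(x)|\,dx \le \int_a^b |p_0 - P(x)|\,dx + \delta$. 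Letting $\delta \downarrow 0$ delivers ``$\le$'', and the two inequalities together prove the claim.

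The main obstacle is the delicate point just flagged: the infimum is genuinely \emph{not attained} when $P$ has a jump at $x_0$ with $P(x_0) > p_0$, because forcing $P_0(x_0) = p_0$ while keeping $P_0$ a right-continuous CDF unavoidably costs an arbitrarily small but strictly positive amount of extra $L^1$ mass immediately to the right of $x_0$. Consequently the ``$\le$'' direction cannot be settled by producing a single minimizer and really needs the limiting family $P_0^{(\delta)}$, and the bulk of the careful writing goes into verifying that each candidate (the exact one when $x_0 < q$, and each $P_0^{(\delta)}$ when $x_0 \ge q$) is a bona fide distribution function — nondecreasing, right-continuous, with limits $0$ and $1$ — paying attention to the behaviour at $a$ and $b$. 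Everything else is bookkeeping around the generalized-inverse facts and the betweenness inequality.
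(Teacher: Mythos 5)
Your proof is correct and follows essentially the same route as the paper's: the lower bound rests on the same observation that $p_0$ lies between $P_0(x)$ and $P(x)$ on $[a,b]$, giving $|P_0(x)-P(x)|\ge|p_0-P(x)|$ there, and the upper bound uses the same flattened CDF when $P(x_0)<p_0$ and the same shrinking family $P_0^{(\delta)}$ (the paper's sequence $P_n^*$ with $\delta = 1/n$) when $P(x_0)>p_0$. Your direct two-inequality organization is, if anything, slightly cleaner than the paper's three-case argument by contradiction, and your remark that the infimum is not attained when $P$ jumps at $x_0$ above $p_0$ is precisely why the paper also resorts to a minimizing sequence in that case.
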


If for some reason \eqref{eq:quant_inf} cannot be analytically obtained, a Monte Carlo integration procedure \citep{robert2005} could be used. This result is applied in \autoref{sct:neuron_test2}.

\subsection{Two-sample test}
\label{sct:two_sample}

In this section, we provide a pragmatic version of the nonparametric two-sample test, a test whose hypothesis originally states that the true distribution functions of two different datasets are the same. In other words, if $X$ and $Y$ are the random variables of interest and $F_X$ and $F_Y$ are their respective distribution functions, then
$$
    H_0: F_X = F_Y = F, \quad F\in \mathbb{F},
$$
is the hypothesis which we seek to expand.

We highlight that $\mathbb{H} = \mathbb{F} \times \mathbb{F}$, i.e., the hypothesis space is the Cartesian product of the space of distribution functions. In \autoref{fig:2sample_hyp}, a visualization is provided to give an idea of the peculiarities of such space. Each axis of the figure represents the distribution function of a specific population. Then, the green line represents the null hypothesis that both distributions are equal. Thus, while the red dot is an element of $\mathbb{H}$ (i.e., a given pair of distribution functions), the red arrow represents the smallest distance between such element and $H_0$.

\begin{figure}[ht!]
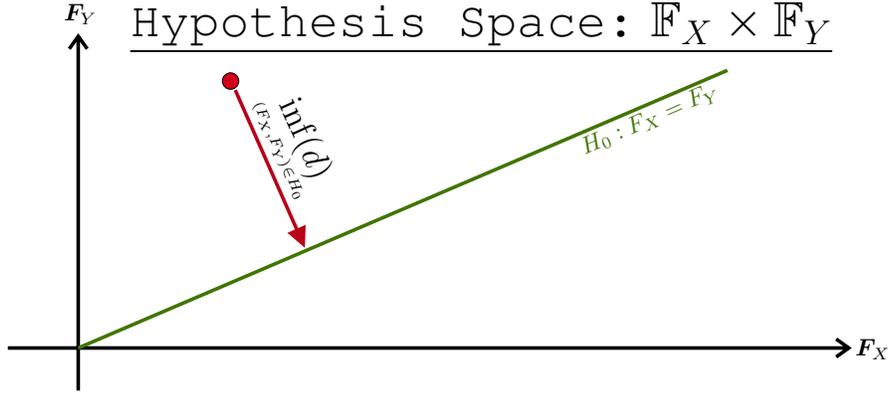

    \centering
        \resizebox{.9\linewidth}{!}{\TwoSample}
    \caption{Representation of the nonparametric two-sample hypothesis. The green line is the original null hypothesis, while the red dot is a pair of distribution functions.}
    \label{fig:2sample_hyp}
\end{figure}

The following result provides an analytical solution for the infimum that is solely based on the distance between the functions obtained from the data:
\begin{theorem}[Two-sample test]
\label{thm:2sample_inf}
Let $H_0: F_X = F_Y = F$, $F \in \mathbb{F}$, be the null hypothesis, and $(P_X, P_Y)$ be a pair of distribution functions. If $d(\cdot,\cdot)$ is such that
\begin{equation}
    \label{eq:2sample_cond}
    d[(F_X, F_Y), (P_X, P_Y)] = d^*(F_X, P_X) + d^*(F_Y, P_Y),
\end{equation}
where $d^*(\cdot, \cdot)$ is a distance function, then
$$
    \inf_{(F_X, F_Y) \in H_0} d[(F_X, F_Y), (P_X, P_Y)] = d^*(P_X, P_Y).
$$
\end{theorem}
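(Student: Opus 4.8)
The plan is to collapse the two-coordinate optimization into a one-coordinate one using the decomposition \eqref{eq:2sample_cond}, and then read off the answer from the metric axioms of $d^*$. Every element of $H_0$ has the form $(F,F)$ for some $F\in\mathbb{F}$, so condition \eqref{eq:2sample_cond} gives
$$
    \inf_{(F_X,F_Y)\in H_0} d[(F_X,F_Y),(P_X,P_Y)] = \inf_{F\in\mathbb{F}}\left[d^*(F,P_X)+d^*(F,P_Y)\right],
$$
and it suffices to show the right-hand side equals $d^*(P_X,P_Y)$.

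For the lower bound I would use symmetry and the triangle inequality of $d^*$: for every $F\in\mathbb{F}$,
$$
    d^*(F,P_X)+d^*(F,P_Y) = d^*(P_X,F)+d^*(F,P_Y) \ge d^*(P_X,P_Y),
$$
so the infimum is at least $d^*(P_X,P_Y)$. For the matching upper bound, note that $P_X$ is itself a distribution function, hence an admissible choice of $F$; taking $F=P_X$ yields $d^*(P_X,P_X)+d^*(P_X,P_Y)=d^*(P_X,P_Y)$ since $d^*(P_X,P_X)=0$. Combining the two bounds gives the claim, and in fact shows the infimum is attained (also at $F=P_Y$).

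There is essentially no obstacle here beyond being explicit about which properties of $d^*$ are invoked — symmetry, the triangle inequality, and $d^*(P,P)=0$ — all of which are part of the hypothesis that $d^*$ is a distance. The only conceptual point worth spelling out is that the additive structure \eqref{eq:2sample_cond} is exactly what permits the reduction to a single distribution function; once that reduction is made, the statement is just the triangle inequality together with the fact that one can push $F$ all the way onto $P_X$ (or $P_Y$).
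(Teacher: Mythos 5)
Your proof is correct and follows essentially the same route as the paper's: reduce the infimum over $H_0$ to a single distribution function via \eqref{eq:2sample_cond}, bound it below by $d^*(P_X,P_Y)$ using symmetry and the triangle inequality, and attain the bound by choosing $F=P_X$. You are slightly more explicit than the paper in invoking $d^*(P_X,P_X)=0$ for the attainment step, which is a welcome clarification but not a different argument.
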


More than simply identifying the infimum for a given dissimilarity, \autoref{thm:2sample_inf} provides a solution that works for any distance function while keeping the intuitive appeal of reaching a decision solely based on the discrepancy between the distribution functions of $X$ and $Y$. Such appeal can be observed in both classical statistical tests -- such as the Kolmogorov-Smirnov test \citep{shiryayev1992} -- and more recent iterations \citep{inacio2020, ceregatti2021}. Moreover, our version can be seen as an enhancement of the Kolmogorov-Smirnov test, since it allows for the choice of other distance functions and takes model uncertainty into account.

Since the theorem makes no restriction on the choice of the distance function, the classification dissimilarity (\autoref{def:class_dissim}) could be used in this case if we subtract it by 0.5, i.e.,
\begin{equation}
    d_C^*(F_X, F_Y) = 0.5 \left[\mathbb{P}\left(\frac{f_X(Z)}{f_Y(Z)} > 1 \Big| Z \sim F_X \right) + \mathbb{P}\left(\frac{f_Y(Z)}{f_X(Z)} > 1 \Big| Z \sim F_Y \right) \right] - 0.5,
    \label{eq:np_dist}
\end{equation}
where $f_X$ and $f_Y$ are the respective density functions of $F_X$ and $F_Y$. \eqref{eq:np_dist} is the distance function used in the simulated study (\autoref{sct:NvsT}).

\section{On choosing the threshold \texorpdfstring{$\boldsymbol{\varepsilon}$}{TEXT}}
\label{sct:epsilon}

The current lack of standards and guidelines for establishing the threshold $\varepsilon$ is the main drawback for researchers that seek to enlarge their hypotheses, so it is imperative to derive suggestions for $\varepsilon$ that can be more generally applied. Although some solutions have been proposed to specific problems \citep{hodges1954, hobbs2007, gross2014, kruschke2018, lakens2018}, none of them offer strategies for determining the threshold in more general settings, such as when dealing with nonparametric hypotheses.

Although we provide general suggestions on how to choose $\varepsilon$ based on the type of intuition a researcher has, these suggestions serve more as a starting point for discussions. Ideally, the value of $\varepsilon$ should reflect a utility judgement of the researcher, their notion of what results should be indistinguishable from the null hypothesis in practice.

\subsection{Intuitions that lead to \texorpdfstring{$\boldsymbol{\varepsilon}$}{TEXT}}
\label{sct:one_eps}

We begin by presenting suggestions that, if followed, are assertive enough to establish a unique value for $\varepsilon$. They consist of:

\paragraph{Using theory or measurement errors.} In this case, there is external information available to determine $\varepsilon$, coming either through theoretical assumptions, knowledge of measurement errors or both. The scope of possible dissimilarity functions for this case would then be limited to those that can use the information on $\varepsilon$ to their advantage.

\begin{customex}{1.3}[Water droplet experiment, continued]\label{ex:drop13}
    This last part of the example uses known results of Physics and more details from the original experiment \citep{duguid1969} to determine a value for $\varepsilon$.
    
    While the objective of the study is to evaluate the radius of droplets through time, the radius itself was not measured directly. Instead, Stoke's law was used to estimate it based on the velocity. It states that
    \begin{equation}
        V_T(t) = \frac{a(t)^2}{K_s} \Longrightarrow a(t) = \sqrt{V_T(t) \times K_s},
        \label{eq:stokes}
    \end{equation}
    where $V_T$ is the terminal velocity and $K_s$ is a known constant that depends on factors such as temperature and humidity (in this case, $K_s = 8.446$). However, $V_T$ was not registered in the experiment, with the mean velocity ($V_M$) being used instead since it can be inferred from the pictures of the camera. Still, since $V_T$ is the derivative of the droplet's position through time, it can be estimated through symmetric differences of the droplet's position in a weighted least squares regression \citep{wang2015}.

    The value of $\varepsilon$ must take into account all sources of error in the experiment. Switching $V_T$ for $V_M$ in \eqref{eq:stokes} is the first source of measurement error, while the second comes from the measurement of $V_M$ itself. A square grid was positioned in front of the device to aid in registering the droplet's position in the tube at a give time, leading to a measurement error of at most $\delta = 0.14$. Therefore, if $\eta := \max_{t \in T} |V_T(t) - V_M(t)|$ is the first measurement error, then $\exists h \in [-(\delta + \eta), (\delta + \eta)]$ such that
    \begin{equation}
      a(t) = \sqrt{K_sV_T(t)} = \sqrt{K_s(V_M(t) + h)}, \quad t \in T.
    \end{equation}
    If $\hat{a}(t) = \sqrt{K_sV_M(t)}$ is the estimate of the radius at time $t \in T$, the margin of error of the radius is
    \begin{align*}
        \epsilon :&= \max_{t \in T, h \in [-(\delta + \eta), (\delta + \eta)]}|a(t) - \hat{a}(t)|\\
        &= \max_{t \in T}\left\{\left|\sqrt{K_s(V_M(t) - \delta - \eta)} - \hat{a}(t)\right|, \left|\sqrt{K_s(V_M(t) + \delta + \eta)} - \hat{a}(t)\right|\right\}.
    \end{align*}

    \autoref{fig:diff_data} shows the respective errors for both $|V_T(t) - V_M(t)|$ and $|a(t) - \hat{a}(t)|$. Following the suggestions of \cite{wang2015}, we use $k = 4$ symmetric differences and disregard the first and last $k$ estimates of $V_T$, leading to $\eta = \max_{t \in T}|V_T(t) - V_M(t)| \approx 0.3555$. By plugging $\eta$ in $|a(t) - \hat{a}(t)|$, we conclude that $\epsilon = \max_{t\in T} |a(t) - \hat{a}(t)| \approx 0.6218$.
    
    \begin{figure}[ht]
        \centering
        \includegraphics[width = \linewidth]{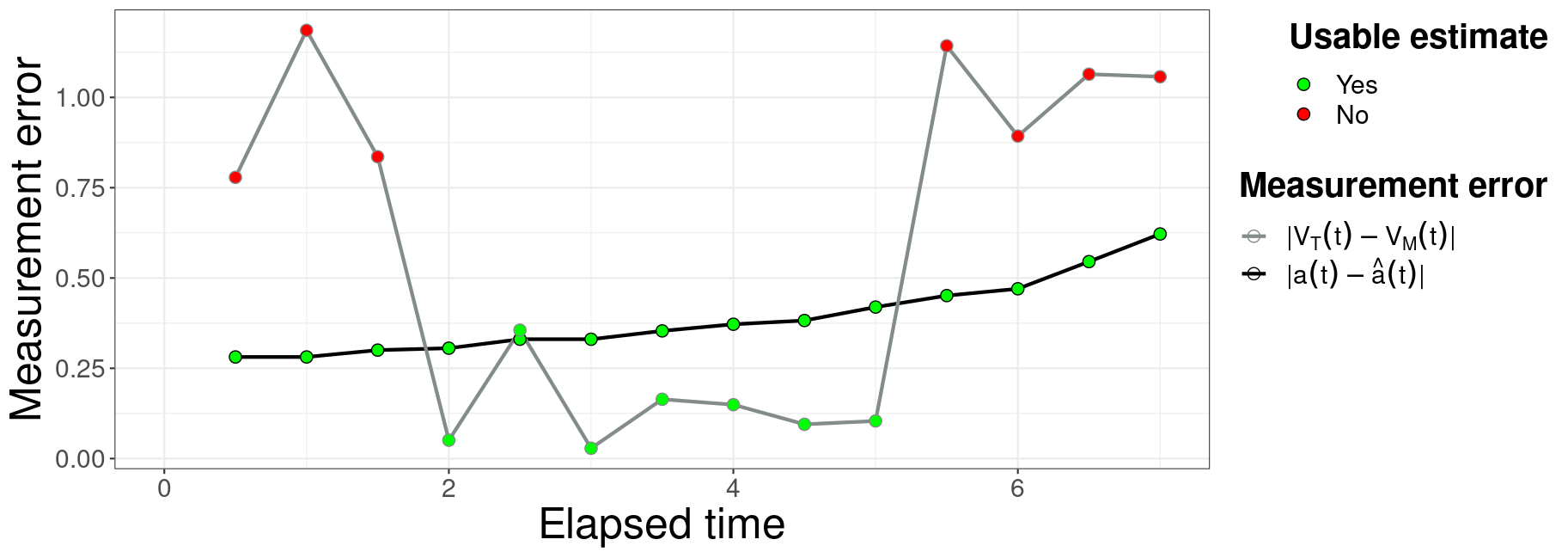}
        \caption{Measurement errors of velocity and radius. Red dots represent poor estimates of $|V_T(t) - V_M(t)|$, green dots represent better ones. The largest green estimate of $|V_T(t) - V_M(t)|$ was the one plugged into $|a(t) - \hat{a}(t)|$.}
        \label{fig:diff_data}
    \end{figure}

    The last step required for reaching the threshold $\varepsilon$ is to adapt $\epsilon$ -- which is related to the $l^\infty$ distance -- to the dissimilarity function of interest, a weighted version of the $l^2$ distance. Proposition 6.11 of \cite{folland2013} ensures that $l^2 \subset l^\infty$, therefore
    $$
        \inf_{\beta \in \mathbb{R}^p}\sqrt{\frac{1}{n}\sum_{i = 1}^n(\boldsymbol{x}_i'\boldsymbol{\beta} - g(\boldsymbol{x}_i))^2} \le \sqrt{\frac{1}{n}}\epsilon \Longrightarrow \inf_{\beta \in \mathbb{R}^p} \max_{i \in \{1, 2, \cdots, n\}}|\boldsymbol{x}_i'\boldsymbol{\beta} - g(\boldsymbol{x}_i)| \le \epsilon.
    $$
    Thus, using $\varepsilon = \sqrt{\frac{1}{n}}\epsilon \approx 0.1606$ as the threshold for the $l^2$ distance leads to the same conclusion as using $\epsilon$ for the $l^\infty$ distance when not rejecting the hypothesis.
\end{customex}

\paragraph{Setting the threshold through the prior.} While specifying a probability mass to a point null hypothesis might be ill-advised if that does not represent the researcher's belief \citep[page 21]{berger1985}, attributing a prior probability to the pragmatic hypothesis itself is a valid possibility. If that is the case, we can then use this to obtain the threshold by checking which value $\varepsilon$ should assume to match the prior. Formally, let $\mathbb{P}(\mathcal{P} \in Pg(H_0)) = \delta$ be the prior probability of $Pg(H_0)$ being true. Since
$$
    \mathbb{P}(\mathcal{P} \in Pg(H_0)) = \mathbb{P}\left(\inf_{P_0 \in H_0} d(P_0, \mathcal{P}) < \varepsilon \right) = \delta \Longleftrightarrow Q_\delta \left(\inf_{P_0 \in H_0} d(P_0, \mathcal{P})\right) = \varepsilon,
$$
where $Q_\delta(\cdot)$ is the $\delta$-quantile function, then $\varepsilon$ is uniquely determined through the choice of $\delta$ and the prior over $\mathcal{P}$.

When the prior over $\mathcal{P}$ is informative but a value for $\delta$ is not clear, we can use the fact that the prior uncertainty is greater than the posterior uncertainty to our advantage. By taking $\delta = \alpha$, it is expected that $Q_\delta\left(\inf_{P_0 \in H_0} d(P_0, \mathcal{P} | \boldsymbol{X} = \boldsymbol{x})\right)$ should be smaller than $\varepsilon$ when $Pg(H_0)$ is true and greater when it is false. This suggestion is further explored in \autoref{sct:NvsT}.

\paragraph{Building from related studies.} Say that there is at least one study in the literature with positive results which can be used as reference for your own study. Since the interest here is to provide a direct comparison between their findings and yours, apply the same model and the same significance level $\alpha$ of your study to their data, choosing the smallest $\varepsilon$ that leads to non-rejection. If there are multiple studies, take the largest $\varepsilon$ between them so that none of the studies is rejected.

This approach is particularly useful for reproducibility research, since newer studies tend to have a larger sample and data with higher quality than the old one, so the same conclusion should be reached if the hypothesis is true. Other cases where this approach might be reasonable are when there has been observed an effect for a given group (geographical region, social class, species, etc.) and we wish to check if the same effect exists for a different group. A similar idea is found in \citet[Section 9.12]{lakens2022}

\begin{example}[Worldwide gender wage gap]
\label{ex:wage_gap}
The gender wage gap is a multifaceted issue that remains harming women in the workforce for the last 200 years \citep{goldin1990}, even though some advances have been made to reduce it \citep{blau2017}. Let $X$ represent the difference between the wage gap of two consecutive years and let the null hypothesis be
$$
    H_0: F_X(0) = 0.25, \quad F_X \in \mathbb{F}_X,
$$
i.e, that only 25\% of the countries have managed to reduce the wage gap between years. Using data from the ``pay gap as difference in hourly wage rates'' in different countries between the years of 2021 and 2020 \citep{unece}, our objective is to deliberate what $\varepsilon$ should be used in a follow-up study on the same matter.

We remove the countries with one or both entries missing, resulting in a sample of $n = 28$. Then, we use a Dirichlet process \citep{ferguson1973} with scaling parameter equal to 1 and centered on $N(0, 10)$ as the prior. Lastly, we apply \autoref{thm:quant_inf} and choose $\varepsilon$ as the largest value that would lead to rejecting $H_0$ when $\alpha = 0.05$ on \mytest{}, leading to $\varepsilon \approx 0.0312$. Therefore, in a follow-up study, if such value of $\varepsilon$ leads to rejection, we can safely conclude that $H_0$ has failed to reproduce.
\end{example}

\subsection{Intuitions that delimit \texorpdfstring{$\boldsymbol{\varepsilon}$}{TEXT}}
\label{sct:many_eps}

When the intuitions provided by the researcher are not sufficient to provide a definitive value for $\varepsilon$, but can nevertheless be of use, some suggestions are:

\paragraph{Setting an upper bound through examples.} This case consists of listing the pairs of elements in the hypothesis space that the researcher assumes to be negligible from each other. Then, by obtaining the dissimilarities of those combinations and taking the largest of them, the result can be assigned as the value of $\varepsilon$. This represents a lower bound for the real $\varepsilon$ of interest and, in case the test does not reject the hypothesis, provides the exact same conclusion as the ``true'' $\varepsilon$. This strategy is employed in \autoref{sct:application}.

\paragraph{Using multiple candidates for \texorpdfstring{$\boldsymbol{\varepsilon}$}{TEXT}.} We assume that, instead of dealing with a unique $\varepsilon$, there is a list or a range of values for $\varepsilon$ which one must consider. This might happen when there are multiple professionals and each of them provides their own suggested $\varepsilon$, such as when there are more ``liberal'' or ``conservative'' choices available for it \citep{gross2014}. The idea is to simply perform \mytest{} for each $\varepsilon$ on the list (or to a grid based on the range of reasonable candidates) and take as final the decision that came out the most. Further still, we could weight each candidate based on some criteria (such as the importance of the professional or how much smaller a specific $\varepsilon$ is when compared to the others) and apply the same idea. For example, since the classification dissimilarity (\autoref{def:class_dissim}) only takes values in $[0.5, 1]$, one could build a grid and use weights that decrease linearly to reach a decision, such as giving weight 1 to $\varepsilon = 0.5$ and 0 to $\varepsilon = 1$.

\paragraph{Direct graphical evaluation.} Lastly, we suggest the user to simply plot the conclusion as a function of $\varepsilon$ and $\alpha \in [0,1]$, and then use this graphical evaluation to decide if rejecting the hypothesis makes sense. This is the only suggestion that does not require setting neither $\varepsilon$ nor $\alpha$ beforehand and should thus be used with caution. After all, this liberty could influence the analyst of the test towards making the conclusion they already agree with, biasing the results.

More than an actual suggestion for reaching conclusions, this plot acts as a tool for transparency and plurality. Since disagreements on the choice of $\alpha$ and $\varepsilon$ are sure to be common, it neatly provides an indication of the decision one should take for their particular choice without the requirement of doing the whole analysis once again.

\begin{customex}{3.1}[Worldwide gender wage gap, continued]
    \autoref{fig:alpha-eps} presents, for each combination of $(\varepsilon, \alpha)$, the decision suggested by \mytest{}, with the red area implying rejection and the green area implying non-rejection. Based on the figure alone, we can safely conclude that researchers advocating for $\varepsilon \ge 0.1$ should not reject the hypothesis, since $\alpha$ would need to be set around 0.5 to lead to rejection.

    \begin{figure}[ht]
        \centering
        \includegraphics[width = \linewidth]{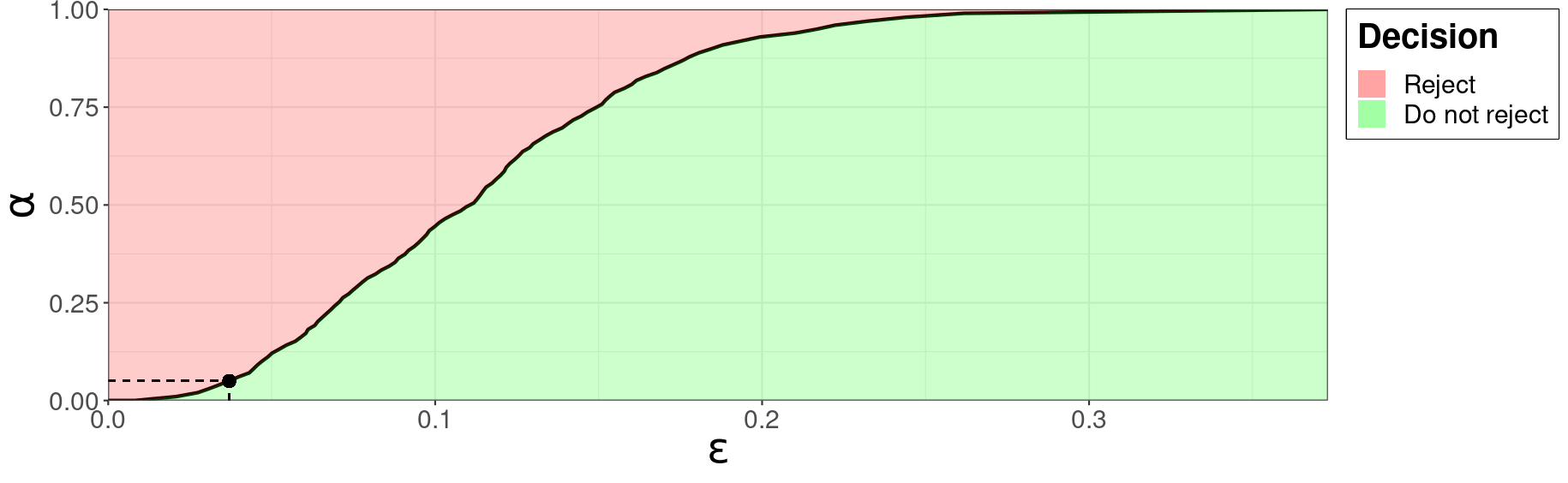}
        \caption{Decision regions as a function of $(\varepsilon, \alpha)$ for the gender wage gap data (red for rejection, green for non-rejection). The black dot in the curve indicates the initial choice for $\varepsilon$ based on $\alpha = 0.05$.}
        \label{fig:alpha-eps}
    \end{figure}
\end{customex}

\section{Simulated studies}
\label{sct:simul}

\subsection{Regression on a binary response variable}
\label{sct:binary}
This next setting uses data from a logistic regression to evaluate if the test can discriminate between link functions as the sample size grows. Let $\boldsymbol{X}$ be a 3 column matrix, with all values sampled from a $U(-3,3)$, and $Y$ be a binary variable such that
\begin{equation*}
    \mathbb{P}(Y_i = 1| \boldsymbol{X}) = \frac{1}{1 + \exp(-0.5 +1.5\boldsymbol{X}_{i,1}  -2\boldsymbol{X}_{i,2} +0\boldsymbol{X}_{i,3})}, \quad i \in \{1, 2, \cdots, n\},
\end{equation*}
where $n$ represents the sample size.

We use the nonparametric model proposed by \cite{deyoreo2015} to draw estimates of $\mathbb{P}(Y = 1 | \boldsymbol{X})$ and then apply the hypothesis from \autoref{eq:glm_hyp} to check which link function (logit or cloglog) seems better suited for the data. The prior specification follows the second approach suggested by that paper and we set $Gamma(2,2)$ as the prior distribution of the Dirichlet process' scaling parameter. We truncate the Dirichlet process so that it provides 30 mixture components.

\autoref{fig:link-test} provides the test results for different sample sizes. We observe that, when increasing the sample size, the value of $\varepsilon$ that would lead to rejection becomes consistently smaller for both link functions and the decision becomes less dependent on the choice of $\alpha$. Still, the logit link presents a superior performance for all sample sizes, and the difference between curves becomes more apparent as well.

\begin{figure}[ht!]
     \centering
     \begin{subfigure}[b]{0.49\textwidth}
         \centering
         \includegraphics[width=\textwidth]{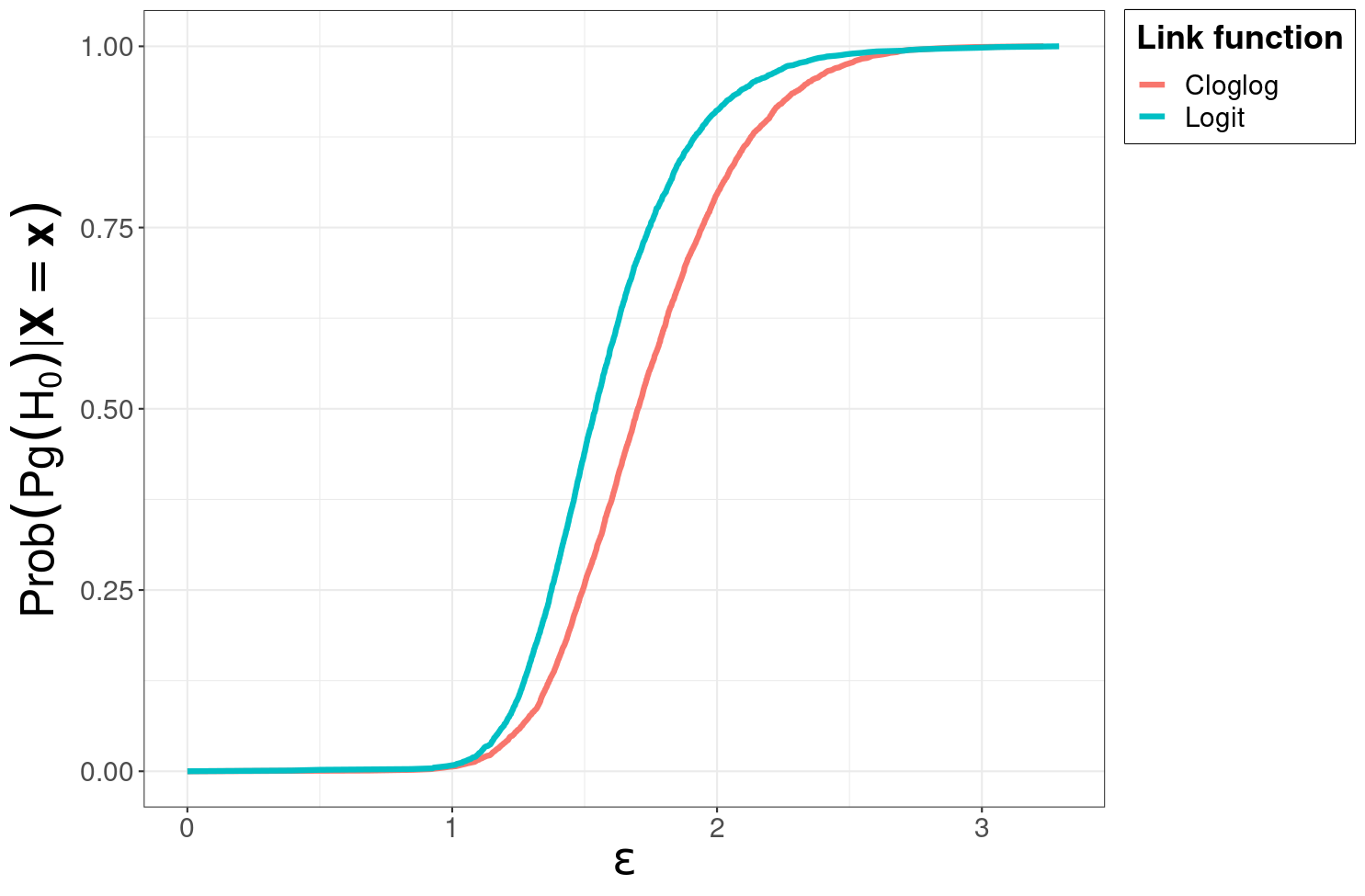}
         \caption{$n = 2500$}
         \label{fig:logit25}
     \end{subfigure}%
     \hfill
     \begin{subfigure}[b]{0.49\textwidth}
         \centering
         \includegraphics[width=\textwidth]{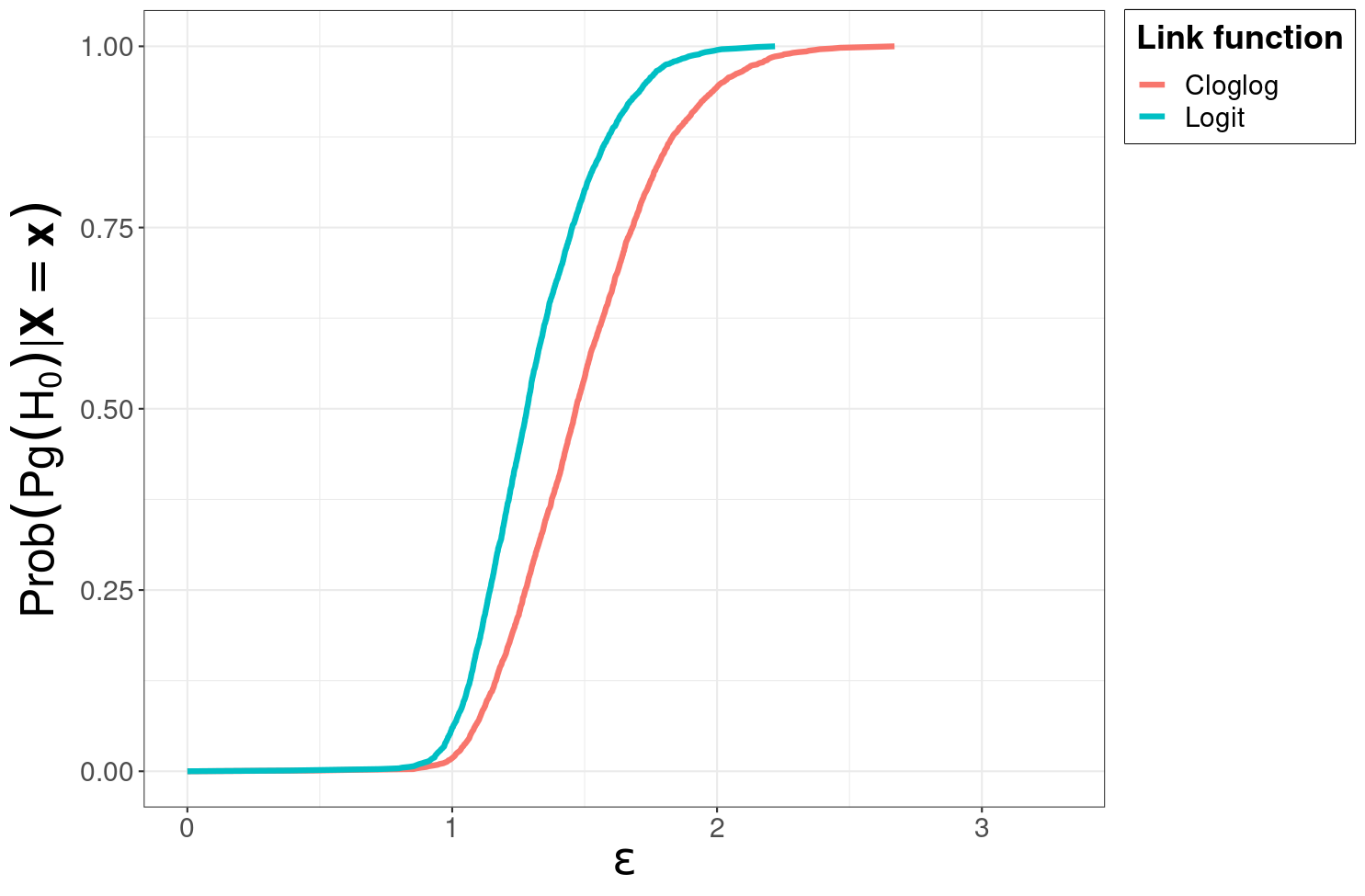}
         \caption{$n = 5000$}
         \label{fig:logit50}
     \end{subfigure}
     \begin{subfigure}[b]{0.49\textwidth}
         \centering
         \includegraphics[width=\textwidth]{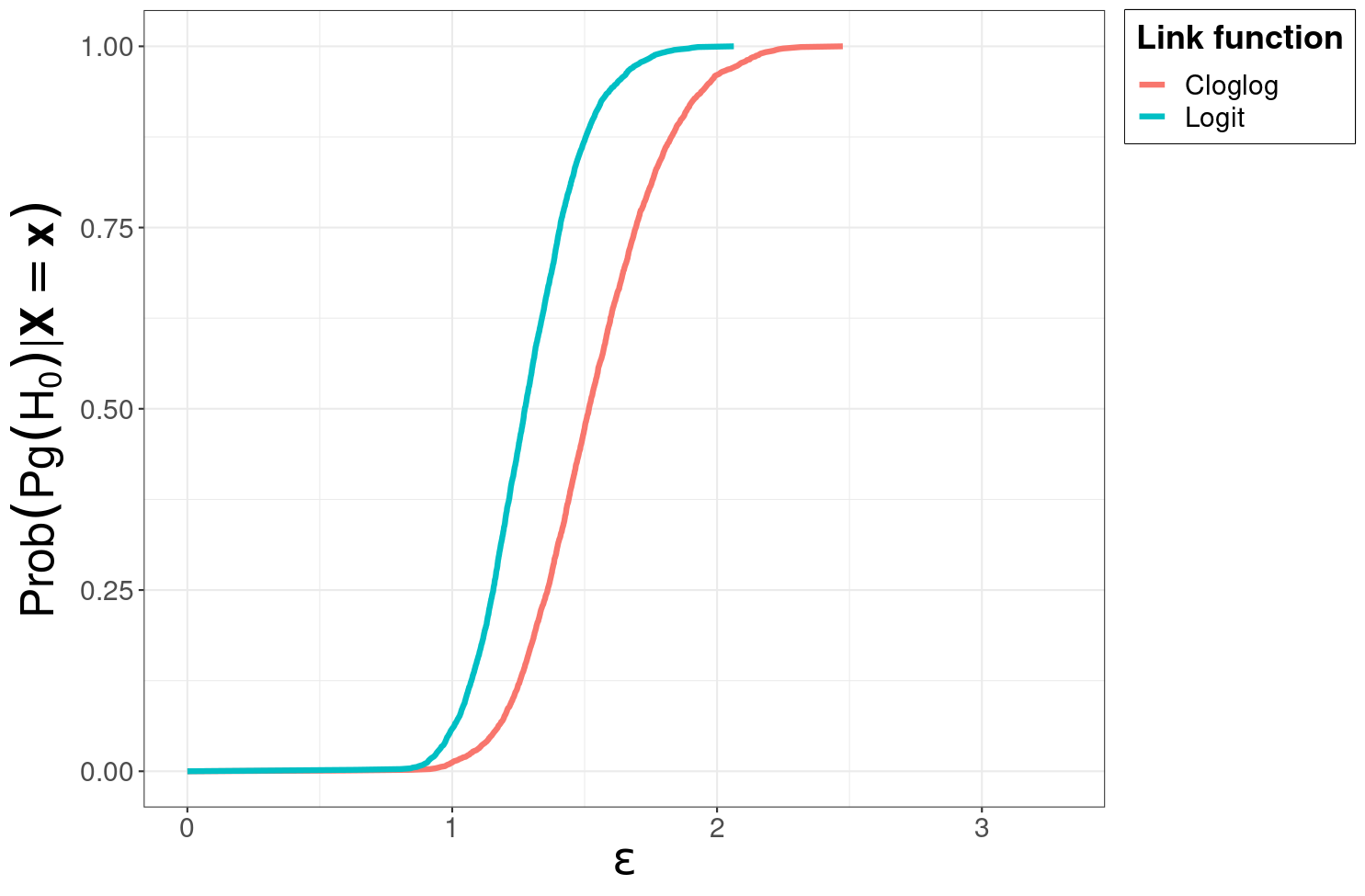}
         \caption{$n = 7500$}
         \label{fig:logit75}
     \end{subfigure}%
     \hfill
     \begin{subfigure}[b]{0.49\textwidth}
         \centering
         \includegraphics[width=\textwidth]{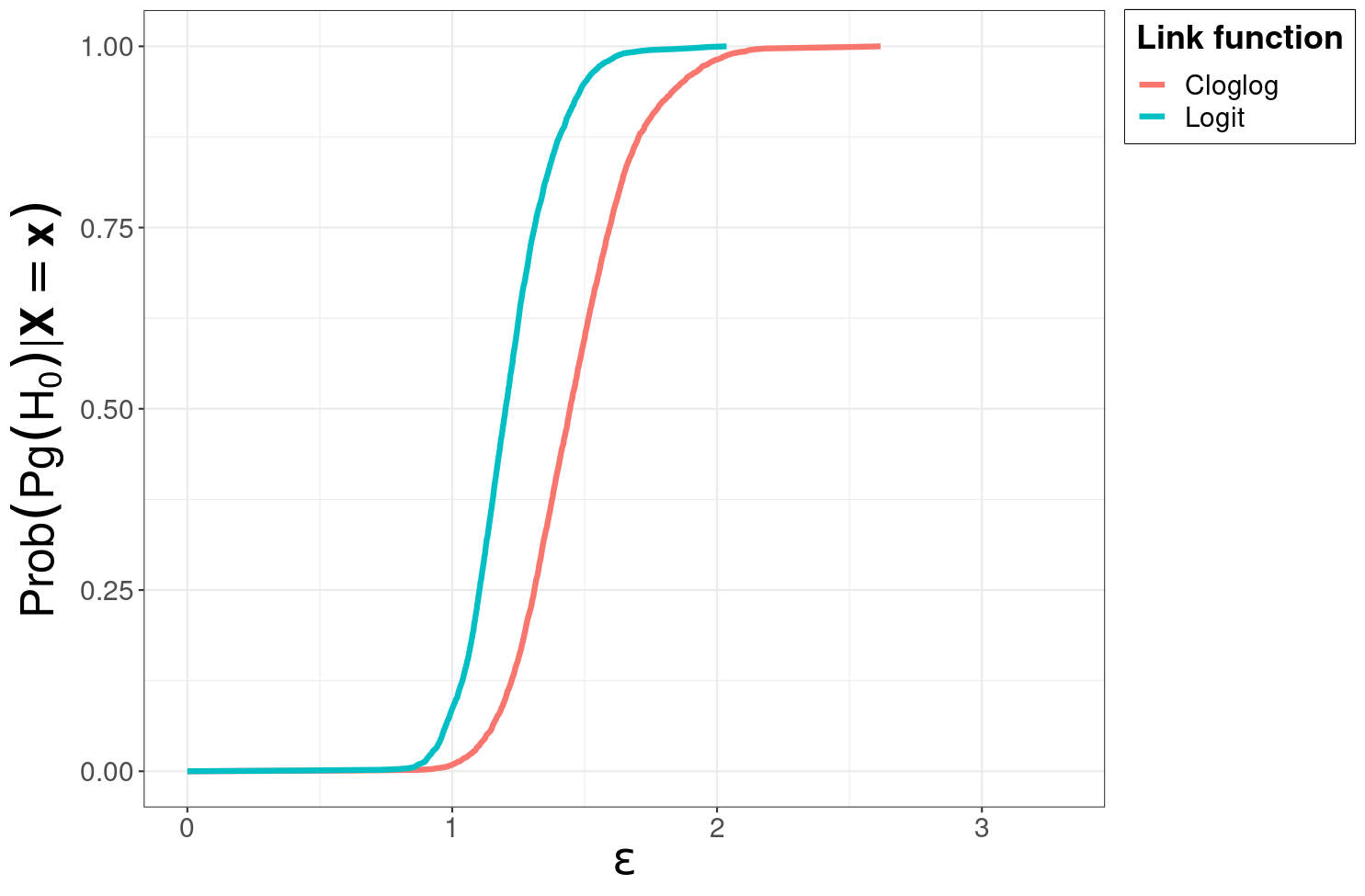}
         \caption{$n = 10000$}
         \label{fig:logit100}
     \end{subfigure}
        \caption{Largest $\varepsilon$ that entails rejection and the posterior probability of $Pg(H_0)$ for each link function and multiple sample sizes.}
        \label{fig:link-test}
\end{figure}

\subsection{Similarity between Normal and Student's t distributions}
\label{sct:NvsT}

Distribution tables have been widely present in statistical textbooks through time \citep{fisher1963, casella2001} and are still used nowadays for pedagogical purposes \citep{mitchell2018}. Particularly for the Student's t distribution table, a common feature is that the table becomes sparser after 30 degrees of freedom, implying that after 30 the deviations between the quantiles are deemed as negligible. Moreover, since the Student's t distribution converges to a standard Normal as the degrees of freedom tend towards infinity, some claim that using the Normal distribution as an approximation when the degrees of freedom are over 30 is good enough for most practical purposes \citep{pett2016}. We use this ``consensus'' as the basis for our simulation study, verifying how sensitive \mytest{} can be to it.

\begin{table}[ht]
    \centering
    \begin{tabular}{r|llll|llll}
        \hline
        \multirow{ 2}{*}{Sample size}& \multicolumn{4}{c}{\mytest{}} & \multicolumn{4}{c}{PTtest}\\
        & $c = 1$ & $c = 4$ & $c = 7$ & $c = 10$ & $c = 1$ & $c = 4$ & $c = 7$ & $c = 10$ \\ 
        \hline
        $10^2$ & 0 & 0.4680 & 0.8421 & 0.8856 & 0.9509 & 0.8843 & 0.8478 & 0.8174 \\ 
        $10^3$ & 0.0057 & 0.9998 & 1 & 1 & 1 & 1 & 1 & 1 \\ 
        $10^4$ & 1 & 1 & 1 & 1 & 1 & 1 & 1 & 1 \\ 
        $3 \times 10^5$ & 1 & 1 & 1 & 1 & 1 & 0.9999 & 1 & 1 \\
        $4 \times 10^5$ & 1 & 1 & 1 & 1 & 1 & 0 & 0 & 1 \\
        $5 \times 10^5$ & 1 & 1 & 1 & 1 & 0.0015 & 0 & 0 & 0 \\
        $6 \times 10^5$ & 1 & 1 & 1 & 1 & 0.0978 & 0 & 0 & 0 \\
        $7 \times 10^5$ & 1 & 1 & 1 & 1 & 0 & 0 & 0 & 0 \\
        \hline
    \end{tabular}
    \caption{Posterior probabilities based on $H_0: F_X = F_Y$ from \mytest{} ($\varepsilon \approx 0.0657$) and from the PTtest for different sample sizes and values of the hyperparameter $c$. In all cases, one dataset was generated from a $N(0,1)$ and the other from a $t_{30}$.}
    \label{tab:n_t_compare}
\end{table}

Let $H_0: F_X = F_Y$, where $X$ represents data coming from the $N(0,1)$ and $Y$ from the $t_{30}$. \autoref{tab:n_t_compare} presents a comparison between \mytest{} and the PTtest \citep{holmes2015} in such context for multiple sample sizes. In order to highlight the difference between the methods while keeping them as similar as possible, we draw from the posterior of a Pólya tree process (PT, \cite{lavine1992, lavine1994}) for \mytest{} as well. We follow the recommendation of \cite{holmes2015} for choosing the hyperparameter $c$ of the PT and use $c \in \{1, 4, 7, 10\}$ for our comparisons. For both datasets, we apply a PT centered on $N(0,1)$.

Now, let us retrace all steps of the \hyperlink{blk:proc}{\mytest{} procedure}, but skipping the choice of $\alpha$ and step 4 altogether, since we are only interested in the posterior probabilities.
\begin{enumerate}
    \item The null hypothesis is $H_0: F_X = F_Y$. We use \eqref{eq:np_dist} as the dissimilarity function and follow the prior thresholding guideline presented in \autoref{sct:one_eps} for establishing $\varepsilon$. For each $c \in \{1, 4, 7, 10\}$, we obtain $\varepsilon$ such that $\mathbb{P}[Pg(H_0)] = 0.5$ and choose the most restrictive of them, which in this case resulted in $\varepsilon \approx 0.0657$.
    
    \item Since the number of parameters of the PT is infinite, we draw from a partially specified PT \citep{lavine1994} instead. Following \cite{hanson2002}, we set $\log_2 n \approx 20$ as the number of layers, $n$ being the largest sample size of \autoref{tab:n_t_compare}.
    
    \item From \autoref{thm:2sample_inf}, we conclude that, for any $(P_X, P_Y)$ obtained from the data,
    $$
    \inf_{(F_X,F_Y) \in H_0}d\left[(F_X,F_Y),(P_X,P_Y)\right] = d^*_C(P_X,P_Y).
    $$
    Now, let $\Omega$ be the sample space of both datasets and $(\Omega_i)_{i \in \{1, \cdots, I\}}$ be the sets obtained from the partition of the last layer of the PT. Then, if $F$ and $G$ come from partially specified PTs centered on the same distribution function,
    \begin{align*}
        \mathbb{P}\left(\frac{f(Z)}{g(Z)} > 1 \Big| Z \sim F \right) =&
        \sum_{i = 1}^I \mathbb{P}\left(\frac{f(Z)}{g(Z)} > 1 \Big|
        \{Z \sim F\} \cap \{Z \in \Omega_i\} \right)
        F\left(Z \in \Omega_i\right)\\
        =& \sum_{i = 1}^I \mathbb{P}\left(\frac{F\left(Z \in \Omega_i \right)}{G\left(Z \in \Omega_i \right)} > 1 \Big| \{Z \sim F\} \cap \{Z \in \Omega_i\} \right)
        F\left(Z \in \Omega_i\right)\\
        =& \sum_{i = 1}^I \mathbb{I}\left(\frac{F\left(Z \in \Omega_i \right)}{G\left(Z \in \Omega_i \right)} > 1 \right)
        F\left(Z \in \Omega_i\right),
    \end{align*}
    and thus \eqref{eq:np_dist} can be obtained analytically, easing the calculation of \eqref{eq:prob_pg}.
    
\end{enumerate}

From \autoref{tab:n_t_compare}, we see that the PTtest provides the desired outcome for smaller samples, but rejects when the sample size is large enough. Of course, rejecting the hypothesis is no fault of the PTtest since $H_0$ is false, but it is an indication that the test may be too rigorous on negligible differences that are perfectly compatible with real-world data when the sample size is large.

Unlike the PTtest, \mytest{} remains consistent for all cases as the sample size grows, and this is not to be confused with the method being permissive. Compared to the PTtest, its probability was generally lower for small sample sizes, but this is largely a consequence of choosing the more conservative $\varepsilon$. Moreover, the true dissimilarity between $N(0,1)$ and $t_{30}$ is around 0.005 and, when using this value for $\varepsilon$ instead, for no sample size did \mytest{} reach a probability other than 0.

\section{Application: Neuron spike analysis}
\label{sct:application}

In this section, we apply \mytest{} to data on the time between neuron spikes (in microseconds) of an epilepsy patient exposed to visual stimuli (pictures in varied contexts, each context represents an experiment). The first test evaluates if a Poisson process \citep{ross2009} can describe the data, while the second uses the median to verify if the neuron activity is similar across experiments. In both tests, we use a Dirichlet process (DP, \cite{ferguson1973}) with a centering distribution gamma and scaling parameter of 1. To stipulate the hyperparameters of the gamma distribution, we remove one of the experiments from the data and use its maximum likelihood estimates (MLE).

The original dataset \citep{faraut2018} is composed of 42 patients and the brain activity of their amygdala and hippocampus as they were subjected to the stimuli. The authors identified clusters of activity, which were assumed to represent individual neurons, and registered a total of 1576 individual neurons. We restricted the analysis to the neuron ``2494'' due to it having a high number of experiments applied (8 in total) and a reasonably high sample size in each experiment (minimum of 693, maximum of 2691). As for the experiments, we use the notation ``a-b'' to represent session b of experiment a, since the same type of visual stimuli might be presented at different times.

\begin{figure}[ht]
    \centering
    \includegraphics[width = \linewidth]{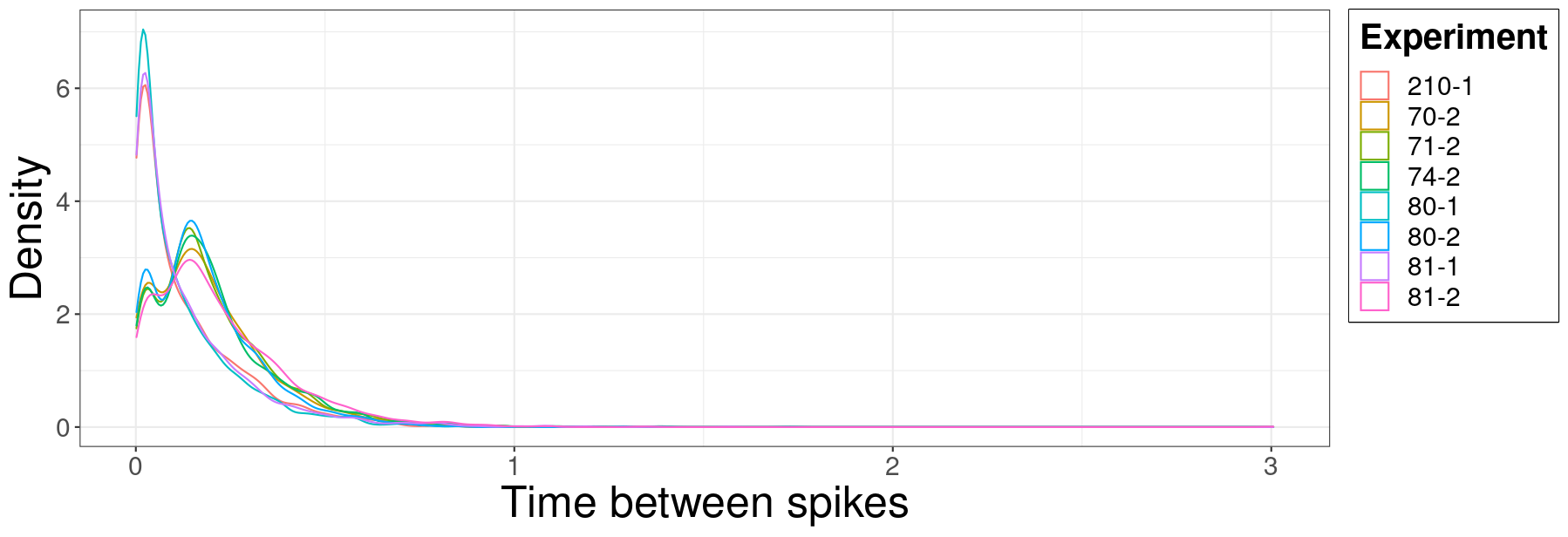}
    \caption{Smoothed sample density of the time between spikes of neuron ``2494'' for each experiment.}
    \label{fig:spike_den}
\end{figure}

\autoref{fig:spike_den} presents the smoothed sample densities for each experiment of neuron ``2494''. This plot alone already puts the assumption of a Poisson process into question, since some cases exhibit a bimodal behavior with peaks not that close to 0. As for the median, the densities of the experiments seem to be roughly divided in two groups, so the intragroup median might be similar enough.

For both tests, we use available information on how neurons work to set an upper bound for $\varepsilon$ through examples, a procedure described in \autoref{sct:many_eps}. Since a neuron spike typically lasts for 1 millisecond \citep[Section 1.1.1]{gerstner2002}, it would be physically impossible for another spike to be observed in such interval. This is also corroborated by the fact that the smallest time observed between spikes is 0.0016 second, i.e, 1.6 milliseconds. Therefore, if the difference between two distribution functions could be attributed to the 1 millisecond threshold, they should be deemed as practically equivalent.

We turn once again to the experiment excluded from the analysis to derive a distribution function of reference and to establish $\varepsilon$ from it. Let $A \sim Gamma(\hat{\alpha}, \hat{\beta})$, where $(\hat{\alpha}, \hat{\beta})$ are the MLE based on the removed experiment. If $tol = \pm 0.001$ represents the 1 millisecond threshold, we take $B \sim Gamma(\tilde{\alpha}, \tilde{\beta})$ such that $\mathbb{E}[B] = \mathbb{E}[A] + tol$ (the means differ by at most 1 millisecond) and $\mathbb{V}[B] = \mathbb{V}[A]$ (the variance remains the same). Then, we take $d(F_A, F_B)$ for both values of $tol$ and set the maximum as the proposal for $\varepsilon$.

\subsection{First test: Poisson process}
\label{sct:neuron_test1}

This case is a direct continuation of \autoref{ex:pois_process}, with $H_0: T_i|\lambda \stackrel{ind.}{\sim} Exp(1/\lambda), \lambda \in \mathbb{R}_{\ge 0}, i \in \{1, \cdots, n\}$, and $T_i$ representing the time-lapse between spikes. By using the $L^\infty$ distance from \autoref{eq:linf_pois} as the dissimilarity function and the strategy mentioned just above, we conclude that $\varepsilon \approx 0.0029$. Hence, we should expect a difference of at most 0.0029 between a distribution function drawn from the DP and the exponential distribution that is closest to it for any $x \in (0, \infty)$.

\autoref{fig:eps-pois} provides the largest $\varepsilon$ that leads to rejecting the hypothesis for each value of $\alpha$ in each experiment. From it, it is clear that taking $\varepsilon \approx 0.0029$ leads to rejection for all experiments, since $\mathbb{P}[\mathcal{P} \in Pg(H_0) | \boldsymbol{T}]$ becomes greater than 0 only when $\varepsilon \ge 0.06$. This result means that either the hypothesis should be rejected or that the choice of $\varepsilon$ was too strict. Considering that the values of $\varepsilon$ that would lead to non-rejection are considerably far from the initial estimate, we reject the hypothesis for all experiments.

\begin{figure}[ht]
    \centering
    \includegraphics[width = \linewidth]{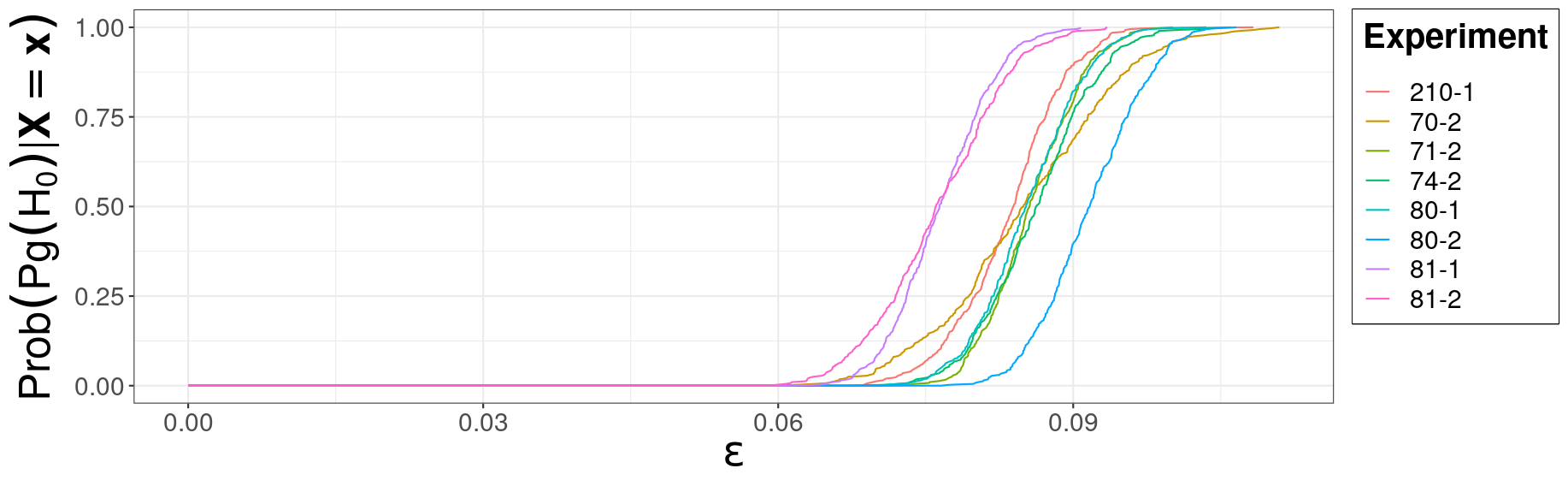}
    \caption{Largest $\varepsilon$ that entails rejection and the posterior probability of $Pg(H_0)$ for each experiment. The pragmatic hypothesis is expanded from $H_0: T_i|\lambda \stackrel{ind.}{\sim} Exp(1/\lambda), \lambda \in \mathbb{R}_{\ge 0}, i \in \{1, \cdots, n\}$.}
    \label{fig:eps-pois}
\end{figure}

\subsection{Second test: Median of time between spikes}
\label{sct:neuron_test2}

This second test is a particular case of the quantile test (\autoref{sct:quantile}, $p_0 = 0.5$), an instance of \mytest{} that has already been demonstrated in \autoref{ex:wage_gap}. In this case, the remaining steps required for performing \mytest{} are to assume a value for $x_0$ and to derive $\varepsilon$ for this case.
For the former, we use the experiment that was removed from the original data, which provides a sample median of around $0.1787$ second between spikes, implying that the null hypothesis can be expressed as
$$
    H_0: F(0.1787) = 0.5, \quad F \in \mathbb{F}.
$$
As for the latter, we once again turn to the scheme based on the 1 millisecond threshold, which when applied for the distance in \autoref{eq:l1_dist} results in $\varepsilon \approx 0.001$.

\begin{table}[ht]
    \centering
    \begin{tabular}{c|lll}
        \hline
        Experiment & Sample size & Sample median & $\alpha$ for rejecting $H_0$\\ 
        \hline
        70-2  & 693  & 0.1651 & 0.970\\
        71-2  & 2388 & 0.1668 & 1\\
        74-2  & 1834 & 0.1718 & 1\\
        80-1  & 2487 & 0.0693 & 0\\ 
        80-2  & 1919 & 0.1601 & 0.975\\ 
        81-1  & 2691 & 0.0785 & 0\\ 
        81-2  & 1547 & 0.1793 & 1\\
        210-1 & 2279 & 0.0795 & 0\\
        \hline
        \end{tabular}
    \caption{Comparison between experiments based on the sample median and the smallest value of $\alpha$ that would lead to the rejection of $H_0$ ($\varepsilon \approx 0.001$).}
    \label{tab:med_prob}
\end{table}

\autoref{tab:med_prob} presents the results of \mytest{} for this case, as well as information on the sample size and the sample median of each experiment. We observe that the test provides assertive decisions in all cases, requiring either a considerably high significance level $\alpha$ to reject or not requiring it at all. Following our intuition, the experiments whose sample medians are closer to 0.1787 are the ones that lead to non-rejection.

\begin{figure}[ht]
    \centering
    \includegraphics[width = \linewidth]{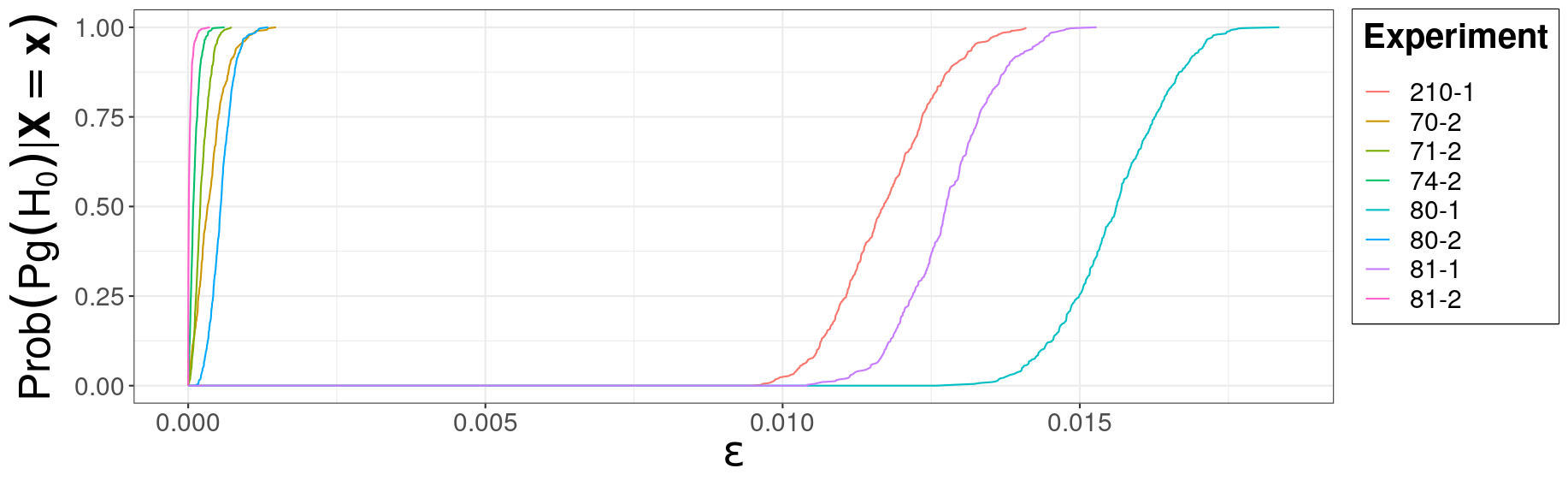}
    \caption{Largest $\varepsilon$ that entails rejection and the posterior probability of $Pg(H_0)$ for each experiment. The pragmatic hypothesis is expanded from $H_0: F(0.1787) = 0.5$, $F \in \mathbb{F}$.}
    \label{fig:eps-median}
\end{figure}

\autoref{fig:eps-median} provides more nuanced results, clearly contrasting between the experiments that were rejected and the ones that were not. While the conclusion of not rejecting the hypothesis for experiments whose curves reach their peak early is hardly contestable, the rejection for the other cases will depend on how strict is the choice of $\varepsilon$. Still, the clear divide between the curves is more evidence of the robustness of our decision.

\section{Discussion}
\label{sct:discussion}

\mytest{} offers a new paradigm for hypothesis testing, one that is theoretically sound, easy to apply and highly adaptable to practical settings. Moreover, although the four pragmatic versions covered here represent enhancements over nonparametric hypotheses routinely evaluated, there are still many other hypotheses left to be expanded. Cases that deal with multivariate or high-dimensional settings are probably the ones where greater attention should be directed, given how common these models have become.

The \hyperlink{blk:proc}{\mytest{} procedure} can be extended to the context of three-way testing -- which can accept, reject or remain undecided towards a hypothesis -- linking it more closely to the work of \cite{kruschke2018}. This can be done by switching \mytest{} for its three-way version, which also retains the monotonicity property (\autoref{prop:3way_mono}).
\begin{definition}[Three-way \mytest{}]
    \label{def:3way_test}
    Let $Pg(H_0, d, \varepsilon)$ be the pragmatic hypothesis and $\mathcal{P}$ be a random object over $\mathbb{H}$. The three-way \mytest{} is such that, for $0 \le \alpha_1 \le \alpha_2 \le 1$,
    \begin{itemize}
        \item If $\mathbb{P}\left(\mathcal{P} \in Pg(H_0)|\boldsymbol{X} = \boldsymbol{x}\right) \le \alpha_1$, reject the hypothesis;
        \item If $\alpha_1 < \mathbb{P}\left(\mathcal{P} \in Pg(H_0)|\boldsymbol{X} = \boldsymbol{x}\right) \le \alpha_2$, remain undecided;
        \item Otherwise, accept the hypothesis.
    \end{itemize}
\end{definition}

Other three-way testing procedures are the GFBST \citep{stern2017} and coherent agnostic tests in general \citep{esteves2016}. We note that all of these procedures heavily rely on pragmatic hypotheses, so the contributions in \autoref{sct:nph} can be of use even if one uses a procedure other than \mytest{}. Further still, if \mytest{} is adapted to evaluate a credibility region instead of $\mathbb{P}(\mathcal{P} \in Pg(H_0)|\boldsymbol{X} = \boldsymbol{x})$, it will acquire new properties that make it a fully coherent procedure in the sense presented by \cite{esteves2023}.

\begin{supplement}
\stitle{\texttt{protest} Package}
\sdescription{\texttt{R} package that implements \mytest{} and that reproduces some of the analyses in this paper. Its development version can be found in \href{https://github.com/rflassance/protest}{\faGithubSquare rflassance/protest}.}
\end{supplement}

\bibliographystyle{ba}
\bibliography{protest.bib}

\begin{acks}[Acknowledgments]
We thank Dani Gamerman, Julio M. Stern, Luben M. C. Cabezas and Luis G. Esteves for the fruitful conversations and suggestions regarding \mytest{}. This study was financed in part by the Coordenação de Aperfeiçoamento de Pessoal de Nível Superior - Brasil (CAPES) - Finance Code 001, FAPESP (grants 2019/11321-9 and 2023/07068-1) and CNPq (grants 309607/2020-5 and 422705/2021-7).
\end{acks}

\appendix
\section*{Appendix: Proofs}
\label{sct:proof}

\begin{proposition}[Monotonicity property of the three-way \mytest{}]
\label{prop:3way_mono}
Let $H_0^1, H_0^2 \subset \mathbb{H}$ be such that $Pg(H_0^1, d, \varepsilon_1) \supseteq Pg(H_0^2, d, \varepsilon_2)$ and $0 \le \alpha_1 \le \alpha_2 \le 1$. Then, the three-way \mytest{} (\autoref{def:3way_test}) has the monotonicity property, i.e.,
\begin{itemize}
    \item If the test rejects $Pg(H_0^1)$, then it rejects $Pg(H_0^2)$ as well;
    \item If the test remains undecided on $Pg(H_0^1)$, it does not accept $Pg(H_0^2)$.
\end{itemize}
\end{proposition}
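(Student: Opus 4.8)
The plan is to reduce everything to a single monotonicity fact about the posterior probabilities and then do a short case analysis matching the three decision regions in \autoref{def:3way_test}. Write $p_j := \mathbb{P}\left(\mathcal{P} \in Pg(H_0^j)\mid \boldsymbol{X} = \boldsymbol{x}\right)$ for $j \in \{1,2\}$, where $Pg(H_0^1) = Pg(H_0^1, d, \varepsilon_1)$ and $Pg(H_0^2) = Pg(H_0^2, d, \varepsilon_2)$. First I would observe that, since $Pg(H_0^1) \supseteq Pg(H_0^2)$ by hypothesis, the event $\{\mathcal{P} \in Pg(H_0^2)\}$ is contained in $\{\mathcal{P} \in Pg(H_0^1)\}$, and monotonicity of the (posterior) probability measure immediately yields $p_1 \ge p_2$. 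This is the only real content; the rest is bookkeeping against the thresholds $0 \le \alpha_1 \le \alpha_2 \le 1$.

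For the first bullet, suppose the three-way test rejects $Pg(H_0^1)$, i.e.\ $p_1 \le \alpha_1$. Then $p_2 \le p_1 \le \alpha_1$, so by \autoref{def:3way_test} the test also rejects $Pg(H_0^2)$. For the second bullet, suppose the test remains undecided on $Pg(H_0^1)$, i.e.\ $\alpha_1 < p_1 \le \alpha_2$. Then $p_2 \le p_1 \le \alpha_2$. The acceptance region for $Pg(H_0^2)$ is $\{p_2 > \alpha_2\}$, and since $p_2 \le \alpha_2$ this region is excluded; hence the test either rejects or remains undecided on $Pg(H_0^2)$, but in no case accepts it. That completes the argument.

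I expect no genuine obstacle here; the proof is essentially one inequality plus a comparison with thresholds. The only point requiring a little care is phrasing the second conclusion correctly: one should not claim the test is undecided on $Pg(H_0^2)$ (it might in fact reject, if $p_2 \le \alpha_1$), only that the acceptance outcome is ruled out. I would also note in passing that the argument never uses the specific form of the pragmatic hypotheses or the dissimilarity $d$ — only the set inclusion $Pg(H_0^1) \supseteq Pg(H_0^2)$ — which is why the remark following \autoref{cor:monotonicity} about non-nested original hypotheses applies here verbatim as well.
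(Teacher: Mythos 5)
Your proof is correct and follows essentially the same route as the paper's: establish $p_1 \ge p_2$ from the set inclusion $Pg(H_0^1) \supseteq Pg(H_0^2)$ and monotonicity of the posterior probability, then compare against the thresholds $\alpha_1 \le \alpha_2$ in each case. Your explicit remark that the second bullet only rules out acceptance (rather than forcing indecision) is exactly the right reading of the statement and matches the paper's conclusion.
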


\begin{proof}
    Let $\mathcal{P}$ be a random object on $\mathbb{H}$. Since $Pg(H_0^1) \supseteq Pg(H_0^2)$,
    \begin{equation}
        \label{eq:pg1_pg2}
        \mathbb{P}(\mathcal{P} \in Pg(H_0^1)) \ge \mathbb{P}(\mathcal{P} \in Pg(H_0^2)).
    \end{equation}
    If $Pg(H_0^1)$ is rejected,
    $$
        \mathbb{P}(\mathcal{P} \in Pg(H_0^1)) \le \alpha_1 \stackrel{\eqref{eq:pg1_pg2}}{\Longrightarrow} \mathbb{P}(\mathcal{P} \in Pg(H_0^2)) \le \alpha_1.
    $$
    If $Pg(H_0^1)$ remains undecided,
    $$
        \alpha_1 < \mathbb{P}(\mathcal{P} \in Pg(H_0^1)) \le \alpha_2 \stackrel{\eqref{eq:pg1_pg2}}{\Longrightarrow} \mathbb{P}(\mathcal{P} \in Pg(H_0^2)) \le \alpha_2.
    $$
\end{proof}

\begin{proof}[\textbf{Proof of \autoref{cor:monotonicity}}] The result follows by taking $\alpha_1 = \alpha_2$ in \autoref{prop:3way_mono}.
\end{proof}

\begin{theorem}[Infimum on a Hilbert space from a subspace of linear functionals]\label{thm:lin_inf}
    Let $\mathcal{H}$ be a Hilbert space and $\boldsymbol{b} = (b_1, b_2, \cdots, b_k)$ be a basis of linear functionals that constitutes the subspace $H \subset \mathcal{H}$. If $d(\cdot, \cdot)$ and $(\cdot, \cdot)$ are the distance function and the scalar product induced by the norm of $\mathcal{H}$ and $f_{\boldsymbol{\beta}} := \sum_{i = 1}^k \beta_i \times b_i$, $\boldsymbol{\beta} = (\beta_1, \beta_2, \cdots, \beta_k) \in \mathbb{R}^k$, then $\inf_{h \in H}d(h, g) = \inf_{\boldsymbol{\beta} \in \mathbb{R}^k}d(f_{\boldsymbol{\beta}}, g) = d(f_{\hat{\boldsymbol{\beta}}}, g)$ for $g \in \mathcal{H}$, where
    $$
        \hat{\boldsymbol{\beta}} = A_{\boldsymbol{b}}^{-1} \times \boldsymbol{g}_{\boldsymbol{b}},
        \quad A_{\boldsymbol{b}} =
        \left(\begin{array}{cccc}
             (b_1, b_1) & (b_2, b_1) & \cdots & (b_k, b_1) \\
             (b_1, b_2) & (b_2, b_2) & \cdots & (b_k, b_2) \\
             \vdots     & \vdots     & \ddots & \vdots     \\
             (b_1, b_k) & (b_2, b_k) & \cdots & (b_k, b_k)
        \end{array}\right),
        \quad \boldsymbol{g}_{\boldsymbol{b}} =
        \left(\begin{array}{c}
             (g, b_1)\\
             (g, b_2)\\
             \vdots \\
             (g, b_k)\\
        \end{array}\right).
    $$
\end{theorem}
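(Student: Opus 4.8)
The plan is to recognize this as the classical orthogonal projection onto a finite-dimensional subspace of a Hilbert space and to identify the optimal coefficients through the normal equations. First I would note that $H = \spn\{b_1,\dots,b_k\}$ is finite-dimensional, hence a closed subspace of $\mathcal{H}$, so the Hilbert projection theorem applies: there is a unique $f^\star \in H$ with $d(f^\star,g) = \inf_{h \in H} d(h,g)$, and $f^\star$ is characterized by the orthogonality condition $(g - f^\star, h) = 0$ for every $h \in H$. Since $\{b_1,\dots,b_k\}$ spans $H$, this is equivalent to the $k$ scalar equations $(g - f^\star, b_j) = 0$ for $j = 1,\dots,k$. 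Also, because $\boldsymbol{\beta} \mapsto f_{\boldsymbol{\beta}}$ maps $\mathbb{R}^k$ onto $H$, we have $\inf_{\boldsymbol{\beta}\in\mathbb{R}^k} d(f_{\boldsymbol{\beta}},g) = \inf_{h\in H} d(h,g)$, so it remains only to compute $f^\star$.

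Writing $f^\star = f_{\hat{\boldsymbol{\beta}}} = \sum_{i=1}^k \hat\beta_i b_i$ and substituting into the orthogonality equations, bilinearity of the inner product gives $(g, b_j) = \sum_{i=1}^k \hat\beta_i (b_i, b_j)$ for each $j$, which is precisely the linear system $A_{\boldsymbol{b}} \hat{\boldsymbol{\beta}} = \boldsymbol{g}_{\boldsymbol{b}}$ with $A_{\boldsymbol{b}}$ the Gram matrix of $\boldsymbol{b}$. Since $\{b_1,\dots,b_k\}$ is linearly independent, $A_{\boldsymbol{b}}$ is invertible (indeed positive definite), so $\hat{\boldsymbol{\beta}} = A_{\boldsymbol{b}}^{-1} \boldsymbol{g}_{\boldsymbol{b}}$ is the unique solution, which is the claimed formula; together with $\inf_{h\in H} d(h,g) = d(f^\star,g)$ this finishes the argument. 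An equivalent route that avoids quoting the projection theorem is a direct computation: for arbitrary $\boldsymbol{\beta}$, expand $\|f_{\boldsymbol{\beta}} - g\|^2 = \boldsymbol{\beta}' A_{\boldsymbol{b}} \boldsymbol{\beta} - 2 \boldsymbol{\beta}' \boldsymbol{g}_{\boldsymbol{b}} + \|g\|^2$, a strictly convex quadratic in $\boldsymbol{\beta}$, and complete the square as $\|f_{\boldsymbol{\beta}} - g\|^2 = (\boldsymbol{\beta} - \hat{\boldsymbol{\beta}})' A_{\boldsymbol{b}} (\boldsymbol{\beta} - \hat{\boldsymbol{\beta}}) + \|g\|^2 - \hat{\boldsymbol{\beta}}' A_{\boldsymbol{b}} \hat{\boldsymbol{\beta}}$ to read off both the minimizer $\hat{\boldsymbol{\beta}} = A_{\boldsymbol{b}}^{-1} \boldsymbol{g}_{\boldsymbol{b}}$ and the minimal value.

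The only point requiring genuine care — and the one I would flag as the main obstacle — is the invertibility of $A_{\boldsymbol{b}}$: it must be deduced from the hypothesis that $\boldsymbol{b}$ is linearly independent, via $\boldsymbol{c}' A_{\boldsymbol{b}} \boldsymbol{c} = \|\sum_i c_i b_i\|^2$, so $\boldsymbol{c}' A_{\boldsymbol{b}} \boldsymbol{c} = 0$ forces $\sum_i c_i b_i = 0$ and hence $\boldsymbol{c} = 0$. Everything else is bookkeeping with the inner product. Finally, the passage to \autoref{thm:lm_inf} is immediate by taking $\mathcal{H} = L^2(\mathbb{P})$ with $(f,g) = \mathbb{E}_X[fg]$, which specializes $A_{\boldsymbol{b}}$ and $\boldsymbol{g}_{\boldsymbol{b}}$ to the stated expectation matrix and vector.
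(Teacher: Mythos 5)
Your proof is correct and follows essentially the same route as the paper's: both identify $H$ as a closed subspace, invoke the Hilbert projection theorem (the paper cites Corollary 5.4 of Brezis) to characterize the minimizer by orthogonality against each $b_j$, and solve the resulting normal equations $A_{\boldsymbol{b}}\hat{\boldsymbol{\beta}} = \boldsymbol{g}_{\boldsymbol{b}}$. Your explicit justification that the Gram matrix $A_{\boldsymbol{b}}$ is invertible via $\boldsymbol{c}' A_{\boldsymbol{b}} \boldsymbol{c} = \|\sum_i c_i b_i\|^2$ and linear independence is a welcome addition, since the paper writes $A_{\boldsymbol{b}}^{-1}$ without comment.
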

\begin{proof}[\textbf{Proof of \autoref{thm:lin_inf}}]
    By construction, $H$ is a closed linear subspace. From corollary 5.4 of \cite{brezis2011}, for each $g \in \mathcal{H}$, $f_{\hat{\boldsymbol{\beta}}}$ is characterized by
    $$
        (g - f_{\hat{\boldsymbol{\beta}}}, f_{\boldsymbol{\beta}}) = \sum_{j = 1}^{k} \beta_j (g - f_{\hat{\boldsymbol{\beta}}}, b_j) = 0, \quad \forall \boldsymbol{\beta} \in \mathbb{R}^k \Longrightarrow (g - f_{\hat{\boldsymbol{\beta}}}, b_j) = 0, \quad \forall j \in \{1, 2, \cdots, k\}.
    $$
    Therefore,
    $$
        (g - f_{\hat{\boldsymbol{\beta}}}, b_j) = (g, b_j) - \sum_{i = 1}^{k}\hat{\beta}_i(b_i, b_j) = 0, \quad \forall j \in \{1, 2, \cdots, k\},
    $$
    thus leading to the linear system
    $$
        \left\{\begin{array}{c}
             \sum_{i = 1}^{k}\hat{\boldsymbol{\beta}}_i(b_i, b_1) = (g, b_1) \\
             \sum_{i = 1}^{k}\hat{\boldsymbol{\beta}}_i(b_i, b_2) = (g, b_2) \\
             \vdots \\
             \sum_{i = 1}^{k}\hat{\boldsymbol{\beta}}_i(b_i, b_k) = (g, b_k)
              
        \end{array}\right. \Longrightarrow A_{\boldsymbol{b}} \times \hat{\boldsymbol{\beta}} = \boldsymbol{g}_{\boldsymbol{b}} \Longrightarrow \hat{\boldsymbol{\beta}} = A_{\boldsymbol{b}}^{-1} \times \boldsymbol{g}_{\boldsymbol{b}}.
    $$
\end{proof}

\begin{proof}[\textbf{Proof of \autoref{thm:lm_inf}}]
    We note that $\mathbb{H} \equiv L^2(\mathcal{X}, \sigma(\mathcal{X}), \mathbb{P})$ is a Hilbert space and that $\spn\{b_1, b_2, \cdots, b_k\} = H_0$, therefore \autoref{thm:lin_inf} follows by switching $H$ for $H_0$. Moreover,
    \begin{align*}
        (b_i, b_j) &= \int_{\mathcal{X}} b_i(\boldsymbol{x})b_j(\boldsymbol{x})d\mathbb{P}(\boldsymbol{x}) = \mathbb{E}[b_i(\boldsymbol{X})b_j(\boldsymbol{X})], \quad \forall i,j \in \{1, 2, \cdots, k\};\\
        (g, b_i) &= \int_{\mathcal{X}} g(\boldsymbol{x})b_i(\boldsymbol{x})d\mathbb{P}(\boldsymbol{x}) = \mathbb{E}[g(\boldsymbol{X})b_i(\boldsymbol{X})], \quad \forall i \in \{1, 2, \cdots, k\}.
    \end{align*}
\end{proof}

\begin{proof}[\textbf{Proof of \autoref{thm:quant_inf}}]
The proof is done in parts.
\begin{itemize}
    \item If $P(x_0) = p_0$, then $\inf_{P_0 \in H_0} d(P_0, P) = \int_{a}^b |p_0 - P(x)|dx = \int_{x_0}^{x_0} |p_0 - P(x)|dx = 0$.
    \begin{subproof}
        If $P(x_0) = p_0$, then $P \in H_0$. If that is the case,
        \begin{equation*}
            \inf_{P_0 \in H_0} \int_{-\infty}^\infty |P_0(x) - P(x)|dx = \int_{-\infty}^\infty |P(x) - P(x)|dx = \int_{-\infty}^\infty 0 dx = 0.
        \end{equation*}
    \end{subproof}
    
    \item If $P(x_0) < p_0$, then $\inf_{P_0 \in H_0} d(P_0, P) = \int_{a}^{b} |p_0 - P(x)|dx$.
    \begin{subproof}
        $P(x_0) < p_0 \Longrightarrow a = x_0 \text{ and } b = P^{-1}(p_0)$. Let $P^*(\cdot)$ be such that
        $$
        P^*(x) := \left\{
        \begin{array}{ll}
            p_0, & \text{if } x \in [x_0, b];\\
            P(x), & \text{otherwise}.
        \end{array}
        \right.
        $$
        Thus, proving the result is equivalent to showing that
        $$
            \inf_{P_0 \in H_0} d(P_0, P) = d(P^*, P) = \int_{a}^{b} |p_0 - P(x)|dx.
        $$
        
        Suppose by contradiction that $\exists P' \in H_0: d(P^*, P) > d(P', P)$. Hence,
        \begin{equation}\label{eq:abs1}
            \int_{a}^{b} |p_0 - P(x)|dx > \int_{-\infty}^{\infty} |P'(x) - P(x)|dx \ge \int_{a}^{b} |P'(x) - P(x)|dx.
        \end{equation}
        For $x \in [a,b]$, $P(x) \le p_0 \le P'(x) \Longrightarrow P(x) - p_0 \le 0 \le P'(x) - p_0$. Since $[P'(x) - p_0] - [P(x) - p_0] = |P'(x) - p_0| + |P(x) - p_0|$,
        \begin{align*}
            \int_{a}^{b} |P'(x) - P(x)|dx =& \int_{a}^{b} |[P'(x) - p_0] - [P(x) - p_0]|dx\\
            =& \int_{a}^{b} |P'(x) - p_0| + |P(x) - p_0|dx \ge \int_{a}^{b}|p_0 - P(x)|dx,
        \end{align*}
        which contradicts \eqref{eq:abs1}, therefore $\inf_{P_0 \in H_0} d(P_0, P) = \int_{a}^{b} |p_0 - P(x)|dx$.
    \end{subproof}
    
    \item If $P(x_0) > p_0$, then $\inf_{P_0 \in H_0} d(P_0, P) = \int_{a}^{b} |p_0 - P(x)|dx$.
    
    \begin{subproof}
        $P(x_0) > p_0 \Longrightarrow b = x_0$. Let $(P_n^*)_{n \ge 1}$ be a sequence of distribution functions such that
        $$
        P_n^*(x) := \left\{
        \begin{array}{ll}
            p_0, & \text{if } x \in \left[a, x_0 + \frac{1}{n}\right);\\
            P(x), & \text{otherwise}.
        \end{array}
        \right.
        $$
        By construction, $P_n^* \in H_0, \forall n \ge 1$, and
        $$
            d(P_n^*, P) = \int_{a}^{x_0 + \frac{1}{n}} |p_0 - P(x)|dx = \int_{a}^{x_0} |p_0 - P(x)|dx + \int_{x_0}^{x_0 + \frac{1}{n}} |p_0 - P(x)|dx,
        $$
        which converges decreasingly to $\int_{a}^{x_0} |p_0 - P(x)|dx$ as $n \rightarrow \infty$.
        
        The proof follows by contradiction. Suppose $\exists P' \in H_0 : \inf_{P_0 \in H_0} d(P_0, P) = d(P', P) \neq \int_{a}^{b} |p_0 - P(x)|dx$. Similarly to the previous subproof,
        \begin{align*}
            \int_{a}^{b} |P'(x) - P(x)|dx = & \int_{a}^{b} |[P(x) - p_0] - [P'(x) - p_0]|dx\\
            = & \int_{a}^{b}|P(x) - p_0|dx + \int_{a}^{b} |P'(x) - p_0|dx\\
            \ge & \int_{a}^{b}|p_0 - P(x)|dx,
        \end{align*}
        and thus $d(P', P) > \int_{a}^{b}|p_0 - P(x)|dx$. But since $d(P_n^*, P) \overset{n \rightarrow \infty}{\longrightarrow} \int_{a}^{b}|p_0 - P(x)|dx$, then $\exists n_0 \in \mathbb{N}$ such that, $\forall n \ge n_0$,
        \begin{align*}
            &d(P_n^*, P) < d(P', P) \Longrightarrow \inf_{P_0 \in H_0} d(P_0, P) \neq d(P', P),
        \end{align*}
        therefore $\inf_{P_0 \in H_0} d(P_0, P) = \int_{a}^{b} |p_0 - P(x)|dx$.
    \end{subproof}
    Based on each of the subproofs presented, we can safely conclude that
    $$\inf_{P_0 \in H_0} d(P_0, P) = \int_{a}^{b} |p_0 - P(x)|dx.$$
\end{itemize}
\end{proof}

\begin{proof}[\textbf{Proof of \autoref{thm:2sample_inf}}]
Without loss of generality, we assume that $\mathbb{H} = \mathbb{F}_X \times \mathbb{F}_Y = \mathbb{F}_X \times \mathbb{F}_X = \mathbb{F}_Y \times \mathbb{F}_Y$. After all, if $X$ and $Y$ were defined on different distribution spaces, we could simply take $\mathbb{F} := \mathbb{F}_X \cup \mathbb{F}_Y$ and use this space instead.

The null hypothesis asserts that, as long as $F_X = F_Y$, the distribution function of both random variables can be any element of $\mathbb{F}$. Thus, if $\Omega$ is the sample space, 
\begin{equation*}
    H_0: (F_X, F_Y) \in \mathbb{F} \times \mathbb{F} : F_X(z) = F_Y(z), \forall z \in \Omega.
\end{equation*}
Therefore, 
\begin{align*}
Pg(H_0) =& \left\{(P_X, P_Y) \in \mathbb{F} \times \mathbb{F}: \inf_{(F_X, F_Y) \in H_0} d[(F_X, F_Y),(P_X, P_Y)] <\varepsilon \right\}\\
=& \left\{(P_X, P_Y) \in \mathbb{F} \times \mathbb{F}: \inf_{P_0 \in \mathbb{F}} d[(P_0, P_0),(P_X, P_Y)] <\varepsilon \right\}.
\label{eq:2sample_general}
\end{align*}
From \eqref{eq:2sample_cond},
\begin{equation}
    \label{eq:2sample_l1}
    \inf_{P_0 \in \mathbb{F}} d[(P_0, P_0),(P_X, P_Y)] = \inf_{P_0 \in \mathbb{F}}[d^*(P_0,P_X) + d^*(P_0,P_Y)].
\end{equation}
Now, since $d^*$ is a distance function, the properties of symmetry and triangle inequality \citep{kreyszig1978} imply that
\begin{equation}
\inf_{P_0 \in \mathbb{F}}[d^*(P_0,P_X) + d^*(P_0,P_Y)] = \inf_{P_0 \in \mathbb{F}}[d^*(P_X, P_0) + d^*(P_0,P_Y)] \ge d^*(P_X,P_Y).
\label{eq:2sample_final}
\end{equation}
Since $P_X \in \mathbb{F}$, the equality in \eqref{eq:2sample_final} is guaranteed if $P_0 = P_X$.
\end{proof}

\end{document}